\newcommand{\tmop}[1]{\ensuremath{\operatorname{#1}}}
\newcommand{\nocomma}{}
\newcommand{\tmstrong}[1]{\textbf{#1}}
\theoremstyle{plain}
\newtheorem{theorem}{Theorem}
\newtheorem{proposition}{Proposition}
\newtheorem{lemma}{Lemma}
\newtheorem{conjecture}{Conjecture}
\newtheorem{definition}{Definition}
\newtheorem*{remark}{Remark}
\begin{document}
\title{A new large class of functions not APN infinitely often}
\author{Florian Caullery}

\thanks{ Institut de Math{\'e}matiques de Luminy, CNRS-UPR9016, 163 av. de Luminy,
case 907, 13288 Marseille Cedex 9, France.}
\thanks{Email: {\tt florian.caullery@etu.univ-amu.fr}}
        
\date{\today}

\maketitle

\begin{abstract}
 In this paper, we show that there is no vectorial Boolean function of degree
  $4 e$, with $e$ satisfaying certain conditions, which is APN over infinitely many extensions of its field of definition. It is a new step in the proof of the conjecture of Aubry, McGuire and Rodier.
\keywords{Vectorial Boolean function \and Almost Perfect Non-linear functions \and Algebraic surface \and CCZ equivalence}
\end{abstract}

\section{Introduction}
\label{intro}

A vectorial Boolean function is a function $f: \mathbb{F}_{2^m} \rightarrow \mathbb{F}_{2^m}$. This object arises in fields like cryptography and coding theory and is of particular interest in the study of block-ciphers using a substitution-permutation network (SP-network) since they can represent a Substition Box (S-Box). In 1990 Biham and Shamir introduced the differential cryptanlysis in \cite{Biham1990}. The basic idea is to analysis how a difference between two inputs of an S-box will influence the difference between the two outputs. This attack was the motivation for Nyberg to introduce the notion of Almost Perfectly Nonlinear (APN) function \cite{Nyberg1994} which are the function providing the S-Boxes with best resistance to the differential cryptanalysis. An APN function is a vectorial Boolean function such that $\forall a \neq 0, b \in \mathbb{F}_{2^m}$ there exist at most two solutions to the equation:
\[
f(x+a)+f(x)=b
\]
The problem of the classification of all APN functions is challenging and has been studied by many authors. In a first time, the studies focused on power functions and it was recently extended to polynomial functions (Carlet, Pott and al \cite{Carlet1998,Edel2006,Edel2009}) or polynomials on small fields (Dillon \cite{Dillon2009}). On the other hand, several authors (Berger, Canteaut, Charpin, Laigle-Chapuy \cite{Berger2006}, Byrne, McGuire \cite{Byrne2008} or Jedlicka \cite{Jedlicka2007}) showed that APN functions cannot exist in certain cases. Some also studied the APN functions on fields of odd characteristic (Leducq \cite{Leducq2012}, Pott and al. \cite{Dobbertin2003,Poinsot2011}, Ness, Helleseth \cite{Ness2007} or Wang, Zha \cite{Zha2010,Zha2011} ). \\
One way to approach the problem of the classification is to consider the function APN over infinitely many extensions of $\mathbb{F}_2$, namely, the exceptional APN functions. The two best known exceptional APN functions are the Gold functions: $f(x)=x^{2^i+1}$ and the Kasami functions $f(x)=x^{4^i-2^i+1}$, both are APN whenever $i$ and $m$ are coprime. We will refer to $2^i+1$ and $4^i-2^i+1$ respectively as the Gold and Kasami exponent. It was proved by Hernando and McGuire in \cite{Hernando2011} that those two functions are the only monomial exceptional APN functions. It was the starting point for Aubry, McGuire and Rodier to formulate the following conjecture:

\begin{conjecture}(\cite{Aubry2010})
The only exceptional APN functions are, up to Carlet Charpin Zinoviev-equivalence (as defined below), the Gold and Kasami functions. 
\end{conjecture}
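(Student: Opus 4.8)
The plan is to translate the APN condition into a statement about an algebraic surface and then to show that, away from the Gold and Kasami degrees, this surface must acquire an absolutely irreducible component defined over $\mathbb{F}_2$, which forces $f$ to fail APN-ness over all sufficiently large extensions. Writing $f(x)=\sum_j a_j x^j$, associate to it the polynomial
\[
\phi_f(x,y,z)=\sum_j a_j\,\frac{x^j+y^j+z^j+(x+y+z)^j}{(x+y)(y+z)(z+x)},
\]
where each summand is genuinely a polynomial in characteristic $2$ because its numerator vanishes on each of the three planes $x=y$, $y=z$, $z=x$. A direct counting argument shows that $f$ is APN over $\mathbb{F}_{2^m}$ exactly when the surface $X_f:\phi_f=0$ has no $\mathbb{F}_{2^m}$-rational point off these three diagonal planes, and $f$ is exceptional exactly when this holds for infinitely many $m$. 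The first step is to record this dictionary together with its consequence: if $X_f$ has an absolutely irreducible component defined over $\mathbb{F}_2$, then by the Weil and Lang--Weil bounds that component carries $q^2+O(q^{3/2})$ points over $\mathbb{F}_{2^m}$, while its intersection with the three diagonal planes is a curve with only $O(q)$ points; hence for $q=2^m$ large there are points off the diagonals and $f$ is not APN. Thus an exceptional $f$ must have $X_f$ with \emph{no} absolutely irreducible $\mathbb{F}_2$-component at all.

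Reducing to this criterion, I would first dispose of the monomial case by citing Hernando--McGuire \cite{Hernando2011}, and then use CCZ-equivalence (Carlet--Charpin--Zinoviev) to normalise the remaining polynomials and argue by the degree $d=\deg f$. The behaviour of $X_f$ at infinity is controlled by the leading form
\[
P_d(x,y,z)=\frac{x^d+y^d+z^d+(x+y+z)^d}{(x+y)(y+z)(z+x)},
\]
a plane curve of degree $d-3$, and the whole classification hinges on understanding for which $d$ this curve, and the surface it bounds, fails to be absolutely irreducible. The second step is the even-degree case. For $d\equiv 2\pmod 4$ the leading form already splits off an absolutely irreducible factor over $\mathbb{F}_2$, so no such $f$ is exceptional. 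For $d=4e$ one needs the finer analysis of $P_d$ carried out in the present paper: the curve at infinity is examined through the factorisation of $e$ and the singularities it produces, and one shows, under the stated hypotheses on $e$, that an absolutely irreducible component survives and lifts to an $\mathbb{F}_2$-component of $X_f$.

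The third step handles odd $d$. Here the target is to prove that if $d$ is neither a Gold exponent $2^i+1$ nor a Kasami exponent $4^i-2^i+1$, then $P_d=0$ contains an absolutely irreducible $\mathbb{F}_2$-component, transverse enough that it forces one on $X_f$; since the leading term alone determines $P_d$, this is precisely the mechanism by which the monomial classification of Hernando--McGuire extends to arbitrary polynomials. For the remaining degrees that are genuine Gold or Kasami exponents the leading form is compatible with being exceptional, so one must bring in the lower-order coefficients $a_j$, $j<d$, and establish a rigidity statement: any perturbation of $x^d$ by lower-degree terms that keeps $X_f$ free of absolutely irreducible $\mathbb{F}_2$-components off the diagonals is CCZ-equivalent to the pure Gold or Kasami monomial.

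I expect the two genuine obstacles to be, first, the complete treatment of the degree-$4e$ family for \emph{all} $e$, since the present paper only removes the hypotheses on $e$ in certain ranges and the singularity analysis of $P_{4e}$ degenerates for the excluded values; and second, the rigidity statement at the Gold and Kasami degrees, where the surface can remain reducible and one must control infinitely many lower-order perturbations simultaneously rather than a single leading form. Both difficulties reduce to the same hard core: deciding absolute irreducibility of families of high-degree surfaces in characteristic $2$ whose singular loci are intricate. This is exactly the step where the incremental approach of the present paper replaces a hoped-for general theorem by a delicate case analysis, and closing the remaining cases is what would complete a proof of the conjecture.
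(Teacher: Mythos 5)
You have not proved this statement, and neither does the paper: it is stated there as a \emph{conjecture} (due to Aubry, McGuire and Rodier) and it remains open. The paper only establishes one new special case, namely that a function of degree $4e$ with $e>3$ such that $\phi_e$ is absolutely irreducible is not exceptional APN. Your proposal accurately reconstructs the standard machinery behind all partial results in this direction --- Rodier's criterion that an absolutely irreducible component of the surface $\phi_f=0$ defined over the base field kills exceptionality via Lang--Weil, the Hernando--McGuire classification of monomials, and a degree-by-degree case analysis --- but everything after your first step is a statement of intent rather than an argument, and you concede as much in your final paragraph. The two obstacles you name are exactly the open core: for degree $4e$ the method is only known to work when $\phi_e$ is absolutely irreducible, which fails for all even $e$, for $e$ a Gold or Kasami exponent, and for infinitely many other odd $e$ (the smallest being $e=205$, by Hernando--McGuire); and the ``rigidity'' statement at Gold and Kasami degrees is precisely what is unknown --- the existing perturbation results (Aubry--McGuire--Rodier, Delgado--Janwa, F\'erard--Oyono--Rodier, and this paper) each constrain only restricted families of lower-order terms. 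So the proposal is a correct survey of the programme, not a proof, and no referee could accept it as one.

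One concrete error in the part you do assert: you claim that for $d\equiv 2 \pmod 4$ the leading form splits off an absolutely irreducible $\mathbb{F}_2$-factor, ``so no such $f$ is exceptional.'' This is false as stated: $f(x)=x^6=(x^3)^2$ has degree $6\equiv 2\pmod 4$ and \emph{is} exceptional APN, being CCZ-equivalent to the Gold function $x^3$. The actual theorem of Aubry--McGuire--Rodier for degree $2e$, $e$ odd, requires $f$ to contain a term of odd degree; when every exponent of $f$ is even one has $\phi_{2j}=A\,\phi_j^2$ with $A=(x+y)(y+z)(z+x)$, so $\phi_f$ is $A$ times a square, and the only $\mathbb{F}_2$-components this produces are the diagonal planes themselves, which carry no points off the diagonals and are useless for Rodier's criterion; the case then reduces, via the Frobenius square, back to odd degree $e$, where the Gold/Kasami degrees again escape. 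This illustrates why the leading form alone cannot decide the conjecture and why the remaining cases are genuinely hard.
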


We provide the definition of the Carlet Charpin Zinoviev equivalence:

\begin{definition}(\cite{Carlet1998})
Two functions $f$ and $g$ are Carlet Charpin Zinoviev (CCZ-)equivalent if there exist a linear permutation between their graphs (i.e. the sets $\{x,f(x)\}$ and $\{x,g(x)\}$).
\end{definition}

It has to be noted that all the functions CCZ-equivalent to an APN function are also APN \cite{Carlet1998}.

By means of a simple rewriting of the definition of APN function in terms of algebraic geometry, Rodier was able to prove that, if the projective closure of the surface $X$ defined by the equation:

\[
\frac{f(x)+f(y)+f(z)+f(x+y+z)}{(x+y)(y+z)(z+x)}=0
\]

has an absolutely irreducible component defined over $\mathbb{F}_{2^m}$, then $f$ is not an exceptional APN function \cite{Rodier2008}. The idea now is to exploit this criteria to prove that the functions which are not CCZ-equivalent to a Gold or Kasami function are not exceptional APN. This approach enabled Aubry, McGuire and Rodier to state, for example, that there is no exceptional APN function of degree odd not a Gold or Kasami exponent and of degree $2e$ with $e$ an odd number \cite{Aubry2010}. 

From now on we let $q=2^m$,

\[
\phi(x,y,z)=\frac{f(x)+f(y)+f(z)+f(x+y+z)}{(x+y)(y+z)(z+x)}
\]
and
\[
\phi_i (x,y,z)=\frac{x^i+y^i+z^i+(x+y+z)^i}{(x+y)(y+z)(z+x)}
\]

In this paper we continue in the same way than Aubry, McGuire and Rodier and are interested in the functions of degree $4e$ with $e$ such that $\phi_e$ is absolutely irreducible. As shown by Janwa and al. (\cite{Janwa1993} and \cite{Janwa1995}) it is the case for example when $e \equiv 3 \pmod{4}$ or when $e \equiv 5 \pmod{8}$ and the maximum cyclic code of length $\frac{e-1}{4}$ has no codewords of weight 4. In particular, $e$ cannot be a Gold or a Kasami exponent. There are many others $e$ which satisfy the condition. It was even conjectured that it was the case of any $e$ odd not a Gold or Kasami exponent but $e=205$ was shown to be the smallest counter-example by Hernando and McGuire \cite{Hernando2011}. We now give an overview of the classification of the exceptional APN function.

\section{The state of the art}
\label{sec:1}

Using the approach described in the introduction Aubry, McGuire and Rodier obtained the following results in \cite{Aubry2010}.

\begin{theorem}
  {\tmstrong{(Aubry, McGuire and Rodier, \cite{Aubry2010})}} If the degree of the
  polynomial function $f$ is odd and not an exceptional number then $f$ is not an exceptional APN function.
\end{theorem}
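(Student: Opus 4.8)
The plan is to produce, for every such $f$, an absolutely irreducible component of the surface $X$ that is defined over $\mathbb{F}_q$, and then to conclude immediately by the criterion of Rodier recalled above. Write $f(x)=\sum_{j=0}^{d}a_j x^j$ with $a_d\neq 0$ and $d=\deg f$ odd. Since each $\phi_j$ is homogeneous of degree $j-3$, the associated polynomial splits as $\phi=\sum_{j} a_j\phi_j$, whose homogeneous part of top degree $d-3$ is exactly $a_d\phi_d$. Taking the projective closure $\overline{X}\subset\mathbb{P}^3$ and intersecting with the hyperplane $H$ at infinity, the trace $\overline{X}\cap H$ is therefore the projective plane curve $C_d\subset\mathbb{P}^2$ cut out by $\phi_d=0$. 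This moves the whole question onto the geometry of the monomial form $\phi_d$, which is insensitive to the lower coefficients $a_j$.

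The bridge from $C_d$ back to $X$ is a lifting lemma: if $\phi_d$ admits an absolutely irreducible factor $g$ that is defined over $\mathbb{F}_q$ and occurs in $\phi_d$ with multiplicity one, then $X$ has an absolutely irreducible component defined over $\mathbb{F}_q$. I would prove this by Galois descent. Factor $\phi=\prod_k \psi_k$ into absolutely irreducibles over $\overline{\mathbb{F}_q}$; the leading form is multiplicative, so $\prod_k \mathrm{lead}(\psi_k)$ equals $a_d\phi_d$ up to a scalar. Because $g$ divides $\phi_d$ with multiplicity one, it divides the leading form of exactly one factor $\psi_{k_0}$. The Frobenius of $\mathbb{F}_q$ permutes the $\psi_k$, acts compatibly on leading forms, and fixes $g$; the uniqueness forced by multiplicity one then compels it to fix $\psi_{k_0}$, so $\psi_{k_0}$ is defined over $\mathbb{F}_q$ and cuts out the desired component. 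The multiplicity-one hypothesis is indispensable: a repeated factor could be swapped with its conjugates without any single $\psi_k$ being rational.

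It remains to feed the lifting lemma, i.e.\ to show that for $d$ odd and not an exceptional number the curve $C_d$ carries an absolutely irreducible $\mathbb{F}_q$-rational component of multiplicity one. This is the heart of the matter and the step I expect to be the main obstacle, since it is precisely where the arithmetic of the exponent $d$ intervenes. I would attack $\phi_d$ directly: locate the singular locus of $C_d$, which concentrates along the lines $x+y$, $y+z$, $z+x$ and at a few distinguished points; control the degrees and multiplicities of any hypothetical splitting by B\'ezout-type intersection estimates; and rule out all factorisations into conjugate components. For the congruence classes isolated by Janwa, Wilson and McGuire (such as $d\equiv 3 \pmod 4$) one obtains outright absolute irreducibility of $\phi_d$, so the hypothesis holds trivially. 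The genuinely delicate cases are those, like $d=205$, where $\phi_d$ is reducible, and there one must still single out one rational factor that is reduced. The uniform treatment of all odd non-exceptional $d$ ultimately rests on the monomial classification of Hernando and McGuire, which guarantees that $\phi_d$ cannot fail to possess such a component unless $d$ is a Gold or Kasami exponent.
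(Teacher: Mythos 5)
You have correctly reconstructed the architecture that the paper attributes to \cite{Aubry2010} (the paper itself gives no proof of this theorem, it only quotes it): Rodier's criterion, the observation that the top homogeneous form of $\phi$ is $a_d\phi_d$ so that $\overline{X}\cap H$ is the curve $\phi_d=0$, and the Galois-descent lifting lemma with the multiplicity-one hypothesis --- this last item is precisely Lemma~2.1 of Aubry--McGuire--Rodier, and your sketch of it (leading forms are multiplicative, Frobenius permutes the absolutely irreducible factors, multiplicity one pins down a Frobenius-fixed factor) is correct.

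The genuine gap is in your third paragraph, and it stems from a misreading of the hypothesis. In \cite{Aubry2010} an \emph{exceptional number} is, by definition, a $d$ for which $\phi_d$ is \emph{not} absolutely irreducible (Gold and Kasami exponents are exceptional in this sense, and $d=205$ is as well, as the present paper recalls). So the hypothesis ``$d$ odd and not exceptional'' already hands you the absolute irreducibility of $\phi_d$; since $\phi_d$ has coefficients in $\mathbb{F}_2\subseteq\mathbb{F}_q$, it is automatically the required rational, reduced, absolutely irreducible component, and your ``heart of the matter'' --- singular-locus analysis, B\'ezout estimates, and the delicate case $d=205$ --- is not part of this theorem at all: $d=205$ is excluded by hypothesis, and determining \emph{which} $d$ are non-exceptional is the separate content of Janwa--Wilson and Hernando--McGuire. (Under the correct reading one can even bypass your lifting lemma: if $\phi=QR$ then $a_d\phi_d$ is the product of the leading forms of $Q$ and $R$, so an absolutely irreducible top form forces $\phi$ itself to be absolutely irreducible over $\mathbb{F}_q$, and Rodier's criterion applies directly --- this is the actual one-line argument of \cite{Aubry2010}.) Worse, the way you propose to close the hard case is logically inverted: Hernando--McGuire's classification says that $x^d$ is not exceptional APN for $d$ not Gold or Kasami, but Rodier's criterion is only a \emph{sufficient} condition, so non-exceptionality of the monomial does not imply that $\phi_d$ possesses a rational reduced absolutely irreducible factor. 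Hernando and McGuire prove the existence of such factors as the \emph{means} to their classification, not as a consequence of it; as written, your treatment of the crucial step is circular.
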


\begin{theorem}
  {\tmstrong{(Aubry, McGuire and Rodier \cite{Aubry2010})}} If the degree of the polynomial function $f$ is $2e$ with e odd and if $f$ contains a term of odd degree, then $f$ is not an exceptional APN function.
\end{theorem}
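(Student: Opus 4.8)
The plan is to apply Rodier's criterion: it suffices to exhibit one absolutely irreducible component of the projective closure $\bar X$ of the surface $X:\phi=0$, defined over $\mathbb{F}_{2^m}$ and distinct from the three ``trivial'' planes $x=y$, $y=z$, $z=x$ (these correspond to the denominator and must be excluded, as the example $f=x^6$, which is CCZ-equivalent to the Gold map $x^3$, already shows). The starting point is that each $\phi_i$ is homogeneous of degree $i-3$, so the leading form (top-degree part) of $\phi=\sum_i a_i\phi_i$ is $a_{2e}\phi_{2e}$. In characteristic $2$, squaring gives $x^{2j}+y^{2j}+z^{2j}+(x+y+z)^{2j}=\bigl(x^j+y^j+z^j+(x+y+z)^j\bigr)^2$, whence the identity $\phi_{2j}=(x+y)(y+z)(z+x)\phi_j^2$. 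Therefore the leading form equals $a_{2e}(x+y)(y+z)(z+x)\phi_e^2$, and the curve at infinity $C=\bar X\cap H_\infty$, with $H_\infty=\{t=0\}$, is cut out by this form in $\mathbb{P}^2$.

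The key local computation is that the line $L:\{x+y=0\}\subset H_\infty$ is a reduced component of $C$. This is equivalent to $(x+y)\nmid\phi_e$, i.e. to $(x+y)^2\nmid\bigl(x^e+y^e+z^e+(x+y+z)^e\bigr)$. Differentiating in $y$ and setting $y=x$ gives $e\,x^{e-1}+e\,z^{e-1}=x^{e-1}+z^{e-1}\neq 0$, the coefficient $e$ surviving precisely because $e$ is odd; this is the step where the hypothesis on $e$ is essential. Hence $(x+y)$ divides the leading form $\phi_{2e}$ to the first power only, so $L$ has multiplicity one in $C$.

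Now comes the lifting step. Factoring $\phi=c\prod_i F_i^{m_i}$ over $\overline{\mathbb{F}}_{2^m}$, where $\{F_i=0\}=Y_i$ are the irreducible components of $\bar X$, and using that passing to the leading form is multiplicative, the factor $(x+y)$ must divide the leading form of exactly one $F_{i_0}$, to the first power and with $m_{i_0}=1$. Thus $Y_{i_0}$ is the unique absolutely irreducible component of $\bar X$ containing $L$. Since $L$ is defined over $\mathbb{F}_2$, the Frobenius fixes $L$ and permutes the components through $L$; by uniqueness it fixes $Y_{i_0}$, so $Y_{i_0}$ is defined over $\mathbb{F}_{2^m}$.

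It remains to use the odd-degree term to guarantee $Y_{i_0}$ is non-trivial, i.e. that it is not the plane $x=y$ (whose line at infinity is exactly $L$). For this I would show $(x+y)\nmid\phi$. Splitting $f=g+h$ into its even and odd parts, linearity and the identity above give $\phi_g=(x+y)(y+z)(z+x)\psi^2$ for some $\psi$, so that $\phi|_{y=x}=\phi_h|_{y=x}$, and a short computation yields
\[
\phi_h\big|_{y=x}=\Bigl(\sum_k a_{2k+1}^{1/2}\,\tfrac{x^k+z^k}{x+z}\Bigr)^2,
\]
which is nonzero exactly because $f$ has a term of odd degree (necessarily of degree at least $3$, the linear term contributing nothing). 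Hence $x=y$ is not a component of $X$, the component $Y_{i_0}$ is non-trivial, and Rodier's criterion shows $f$ is not an exceptional APN function. The main obstacle is this last coordination: extracting a single absolutely irreducible component over the base field from a multiplicity-one line at infinity (handling multiplicativity of leading forms and Frobenius descent with care), and, crucially, verifying that the odd-degree term prevents that component from degenerating into the plane $x=y$ — the very phenomenon that makes pure even monomials genuine exceptions.
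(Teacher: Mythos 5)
Your proposal cannot be compared against a proof in this paper, because the paper states this theorem without proof: it is quoted as background from Aubry, McGuire and Rodier \cite{Aubry2010}. Measured against the original argument of \cite{Aubry2010}, your proof is correct and reconstructs it essentially step for step: the factorization of the leading form via $\phi_{2e}=(x+y)(y+z)(z+x)\phi_e^2$; the verification that oddness of $e$ makes the line $x+y=0$ a multiplicity-one component of the curve at infinity (your derivative computation is equivalent to the paper's Lemma~\ref{eodd}-type evaluation $\phi_e(x,x,z)=(x^{e-1}+z^{e-1})/(x+z)^2\neq 0$); the lifting of a reduced rational line at infinity to a unique absolutely irreducible component of the surface, fixed by Frobenius and hence defined over $\mathbb{F}_q$; and the exclusion of the plane $x=y$ via $\phi\vert_{y=x}=\phi_h\vert_{y=x}\neq 0$. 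You also state Rodier's criterion in its accurate form, with the three planes excluded, which this paper's introduction elides.

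One caveat, which is a defect of the statement as quoted rather than of your argument: since $\phi_1=0$ (and adding a linear term preserves APN-ness outright), a linear term contributes nothing, and $f(x)=x^6+x$ --- APN over every $\mathbb{F}_{2^m}$, just as $x^3$ is --- satisfies the literal hypotheses while being exceptional APN. So the hypothesis must be read as ``a term of odd degree at least $3$,'' which is exactly what your computation of $\phi_h\vert_{y=x}$ requires; your parenthetical ``necessarily of degree at least $3$'' asserts rather than acknowledges this restriction, and that single word is the only wobble in an otherwise faithful and complete reconstruction.
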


There are some results in the case of Gold degree $2^i + 1$:

\begin{theorem}
  {\tmstrong{(Aubry, McGuire and Rodier \cite{Aubry2010})}} Suppose $f \left( x \right) =
  x^{2^i+1} + g \left( x \right)$ where $\deg \left( g \right) \leqslant 2^{i - 1} +
  1$. Let $g \left( x \right) = \sum_{j = 0}^{2^{i - 1} + 1} a_j x^j$. Suppose
  moreover that there exists a nonzero coefficient $a_j$ of $g$ such that
  $\phi_j \left( x, y, z \right)$ is absolutely irreducible. Then $f$ is not an exceptional APN function.
\end{theorem}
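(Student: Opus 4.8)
The plan is to verify the hypothesis of Rodier's criterion: I will show that the surface $X$ defined by $\phi=0$ has an absolutely irreducible component defined over $\mathbb{F}_q$, which by the criterion recalled above rules out $f$ being exceptional APN. Writing $\phi = \phi_{2^i+1} + \sum_{j} a_j \phi_j$, the decisive structural feature is that each $\phi_j$ is homogeneous of degree $j-3$, so that $\phi$ is a sum of homogeneous pieces of pairwise distinct degrees; its top-degree part (leading form) is exactly $\phi_{2^i+1}$, of degree $2^i-2$, while the perturbation $\psi := \sum_j a_j\phi_j$ has degree at most $2^{i-1}-2$. The hypothesis that some $a_j\neq 0$ with $\phi_j$ absolutely irreducible guarantees in particular that $\psi\neq 0$, hence that $\phi$ is not homogeneous; this is what will force a genuine, non-linear component to appear.

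First I would prove the key lemma that the leading form $\phi_{2^i+1}$ is squarefree. Restricting to the plane $z=0$ gives $\phi_{2^i+1}(x,y,0) = (x^{2^i-1}+y^{2^i-1})/(x+y)$, which is a product of $2^i-2$ distinct linear forms because $t^{2^i-1}-1$ is separable in characteristic $2$. Any repeated irreducible factor $m$ of $\phi_{2^i+1}$ would satisfy $z\nmid m$ (as $z\nmid\phi_{2^i+1}$, the restriction being nonzero) and would therefore restrict to a repeated factor $(m|_{z=0})^2$ on $z=0$, which is impossible; hence $\phi_{2^i+1}$ is squarefree. Consequently, in any factorization of $\phi$ into absolutely irreducible factors over $\overline{\mathbb{F}_q}$, the leading forms of distinct factors are pairwise coprime, since their product is the squarefree $\phi_{2^i+1}$.

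The heart of the argument is then a graded degree bookkeeping. Fix an absolutely irreducible factor $F$ of $\phi$ of degree $a\le 2^{i-1}-1$ and set $G=\phi/F$, with leading forms $F_a,G_b$ satisfying $\gcd(F_a,G_b)=1$ and $a+b=2^i-2$. Because $\phi$ has no homogeneous components in the degree range $[2^{i-1}-1,\,2^i-3]$, the corresponding graded pieces of $FG$ vanish; an induction on $s=1,2,\dots,2^{i-1}-1$ using the vanishing relation $F_aG_{b-s}+F_{a-s}G_b=0$ together with $\gcd(F_a,G_b)=1$ shows successively that $G_{b-s}=F_{a-s}=0$. Since $a\le 2^{i-1}-1$, this kills every lower-degree part of $F$, so $F$ is homogeneous and hence, being absolutely irreducible, equals a single irreducible factor of $\phi_{2^i+1}$. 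Thus every absolutely irreducible factor of $\phi$ of degree $\le 2^{i-1}-1$ is homogeneous.

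To conclude: not all factors can be homogeneous, since $\psi\neq 0$ makes $\phi$ inhomogeneous, so at least one factor has degree $\ge 2^{i-1}$; as the degrees sum to $2^i-2<2\cdot 2^{i-1}$, there is exactly one such factor $F_0$, and it is the unique non-homogeneous factor. Being the unique absolutely irreducible factor of its degree, $F_0$ is stable under the Frobenius action permuting the factors of $\phi$, hence (after rescaling) defined over $\mathbb{F}_q$. This supplies the required $\mathbb{F}_q$-rational absolutely irreducible component, and Rodier's criterion finishes the proof. The main obstacle I anticipate is the squarefreeness lemma for $\phi_{2^i+1}$ and the clean coprimality it yields: everything downstream rests on the leading form being reduced, and one must also check carefully that $F_0$, the unique non-homogeneous factor, is genuinely Frobenius-invariant rather than merely a member of a conjugate family.
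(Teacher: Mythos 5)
Your proof is correct, and it is worth noting that the paper itself does not prove this statement --- Theorem 3 is quoted as background from \cite{Aubry2010} --- so the relevant comparison is with Aubry--McGuire--Rodier's original argument. Your core machinery coincides with theirs: the leading form of $\phi$ is $\phi_{2^i+1}$, a product of $2^i-2$ pairwise distinct linear forms (they invoke the explicit Janwa--Wilson factorization $\prod_{\alpha\in\mathbb{F}_{2^i}\setminus\mathbb{F}_2}(x+\alpha y+(\alpha+1)z)$, while you deduce squarefreeness by restricting to $z=0$; both work, yours needing only the standard fact that irreducible factors of a homogeneous polynomial are homogeneous), and the absence of homogeneous components of $\phi$ in degrees $2^{i-1}-1$ through $2^i-3$ drives the same graded induction $F_aG_{b-s}+F_{a-s}G_b=0$ flattening any factor of degree at most $2^{i-1}-1$. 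Where you genuinely diverge is the endgame: AMR take a putative factorization $\phi=PQ$, conclude the smaller factor $P$ is homogeneous, observe that $P$ then divides every homogeneous component $a_j\phi_j$, and use the hypothesized absolutely irreducible $\phi_j$ to reach a contradiction --- so they prove $\phi$ itself is absolutely irreducible (and it is defined over $\mathbb{F}_q$), whereas you use the hypothesis only through $\psi=\sum_j a_j\phi_j\neq 0$ and instead extract the unique inhomogeneous absolutely irreducible factor $F_0$ (unique since two factors of degree at least $2^{i-1}$ cannot fit in total degree $2^i-2$), which is Frobenius-stable and descends to $\mathbb{F}_q$ after normalizing one coefficient. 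Your variant is strictly more general: for instance with $i$ even and $g=ax^5$, the polynomial $\phi_5$ splits over $\mathbb{F}_4$ into two linear forms that actually divide $\phi_{2^i+1}$, so $\phi=\ell\ell'(H+a)$ genuinely factors and AMR's conclusion fails (their hypothesis is not satisfied), yet your $F_0$ still exists and Rodier's criterion still applies; the price is that you get an $\mathbb{F}_q$-rational component rather than the stronger structural fact that $\phi$ is irreducible. Your Galois-descent step is sound as sketched ($\sigma(F_0)=\lambda F_0$ forces $\lambda=1$ after rescaling a nonzero coefficient to $1$), and is in fact close in spirit to how the present paper handles conjugate factors over $\mathbb{F}_{q^3}$ in its Theorem \ref{divisionCondition}.
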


This result has been consequently extended by Delgado and Janwa in \cite{Delgado2012} with the two following theorems:
\begin{theorem}
 {\tmstrong{(Delgado and Janwa \cite{Delgado2012})}}
For $k \geq 2$, let $f(x)=x^{2^i+1}+h(x) \in \mathbb{F}_q$ where $\text{deg}(h) \equiv 3  \pmod{4} < 2^i+1.$ Then $f$ is not an exceptional APN function. 
\end{theorem}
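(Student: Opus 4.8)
The plan is to invoke Rodier's criterion: it suffices to exhibit an absolutely irreducible component of the surface $\phi$ defined over $\mathbb{F}_q$, and I will in fact show that $\phi$ is itself absolutely irreducible (it is defined over $\mathbb{F}_q$, so this gives the component). The starting point is that $f\mapsto f(x)+f(y)+f(z)+f(x+y+z)$ is additive, whence $\phi=\phi_{2^i+1}+\phi_h$, where $\phi_h=\sum_j a_j\phi_j$ runs over the terms $a_jx^j$ of $h$. Since each $\phi_j$ is homogeneous of degree $j-3$, the leading form of $\phi$ is $\phi_{2^i+1}$, of degree $2^i-2$, while the next nonzero homogeneous part is $c\,\phi_{\deg h}$ (with $c\neq 0$ the top coefficient of $h$), of degree $\deg(h)-3$. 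Because $\deg(h)\equiv 3\pmod 4$ lies strictly below $2^i+1$, there are no terms of $h$ of degree between $\deg(h)$ and $2^i$, so every homogeneous component of $\phi$ in the intermediate degrees vanishes: the gap is clean.

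Two structural inputs drive the argument. First, setting $z=0$ gives $\phi_{2^i+1}(x,y,0)=\prod_{\beta\in\mathbb{F}_{2^i}^\times\setminus\{1\}}(x+\beta y)$, and by symmetry $\phi_{2^i+1}$ is a product of $2^i-2$ distinct linear forms over $\mathbb{F}_{2^i}$; in particular it is squarefree. Second, since $\deg(h)\equiv 3\pmod 4$, the result of Janwa et al.\ quoted above guarantees that $\phi_{\deg h}$ is absolutely irreducible, and (as $\deg(h)\neq 4$) it has degree $\deg(h)-3\geq 2$, apart from the trivial subcase $\deg(h)=3$, where $\phi_3=1$ and a short separate argument applies.

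The engine is a homogeneous descent. Suppose $\phi=AB$ over $\overline{\mathbb{F}_q}$ with $\deg A=a\geq 1$ and $\deg B=b\geq 1$. Comparing leading forms gives $A_aB_b=\phi_{2^i+1}$, so $A_a$ and $B_b$ are coprime products of distinct linear forms. Comparing the vanishing intermediate components one degree at a time, coprimality forces all sub-leading components of $A$ and $B$ to vanish down to degrees $\deg(h)-3-b$ and $\deg(h)-3-a$ respectively. Reading off the component of degree $\deg(h)-3$ then yields a relation of the shape $A_a B_{\deg(h)-3-a}+A_{\deg(h)-3-b}B_b=c\,\phi_{\deg h}$, where a term is dropped whenever its index is negative. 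When one factor has degree exceeding $\deg(h)-3$ the corresponding term disappears, and the surviving equation forces the absolutely irreducible $\phi_{\deg h}$, of degree $\geq 2$, to divide a product of linear forms --- impossible. This disposes of every unbalanced factorization and recovers the Aubry--McGuire--Rodier range $\deg(h)\leq 2^{i-1}+1$.

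The main obstacle is the balanced case, where both $a$ and $b$ are at most $\deg(h)-3$; this occurs exactly when $\deg(h)\geq 2^{i-1}+2$, i.e.\ precisely the new range. Here both cross-terms survive and the single degree-$(\deg(h)-3)$ relation no longer suffices. My plan is to continue the descent into all lower degrees, obtaining a full system relating the tails $A'=A-A_a$ and $B'=B-B_b$ to $\phi_{\deg h}$, and then to play the absolute irreducibility and the degree ($\geq 2$) of $\phi_{\deg h}$ against the fact that $A_a$ and $B_b$ split completely into linear forms: restricting the relations to each line $\ell=0$ with $\ell\mid A_a$ and counting, via Bezout, the intersections of the irreducible curve $\phi_{\deg h}=0$ with the line arrangement $\phi_{2^i+1}=0$ should produce the contradiction. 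Making this count tight is, I expect, the technical heart of the proof.
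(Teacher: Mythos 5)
This statement is quoted in Section~2 of the paper as background, with a citation to \cite{Delgado2012}; the paper itself contains no proof of it, so your proposal can only be measured against the actual argument of Delgado and Janwa, and against its own internal completeness. On that score there is a genuine gap, and you have located it yourself: the balanced case. Your homogeneous descent is sound as far as it goes --- the clean gap between the leading form $\phi_{2^i+1}$ (degree $2^i-2$, a product of $2^i-2$ distinct linear forms, hence squarefree with coprime putative factors $A_aB_b$) and the next nonzero component $c\,\phi_{\deg h}$ does force the tails of $A$ and $B$ to vanish down to degrees $\deg(h)-3-b$ and $\deg(h)-3-a$, and the degree-$(\deg(h)-3)$ relation does kill every factorization with $\max(a,b)>\deg(h)-3$, since then $c\,\phi_{\deg h}$ would equal $A_{\deg(h)-3-b}B_b$ (or vanish outright), contradicting absolute irreducibility of $\phi_{\deg h}$ in degree $\geq 2$. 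But this is exactly the Aubry--McGuire--Rodier argument and recovers only their range $\deg(h)\leq 2^{i-1}+1$ (Theorem~3 of the paper). The entire novelty of the Delgado--Janwa theorem is the range $\deg(h)\geq 2^{i-1}+2$, where both cross-terms $A_aB_{\deg(h)-3-a}+A_{\deg(h)-3-b}B_b=c\,\phi_{\deg h}$ survive, and for this you offer only a plan (``should produce the contradiction,'' ``I expect''), not a proof.

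Moreover, the plan as sketched does not obviously close. Restricting the key relation to a line $\ell\mid A_a$ gives $A_{\deg(h)-3-b}B_b\equiv c\,\phi_{\deg h}\pmod{\ell}$, which is a congruence between binary forms of equal degree and carries no immediate contradiction; and a raw Bezout count is toothless here, since the irreducible curve $\phi_{\deg h}=0$ is perfectly entitled to meet each of the $2^i-2$ lines of $\phi_{2^i+1}=0$ in $\deg(h)-3$ points with multiplicity. Some finer input is needed --- in \cite{Delgado2012} the balanced splittings are excluded by a more delicate analysis of how the two curves interact, and it is telling that in the companion case $\deg(h)\equiv 1\pmod 4$ (Theorem~5 of this paper) Delgado and Janwa must even \emph{assume} that $\phi_{2^i+1}$ and $\phi_d$ are relatively prime, a hypothesis that is automatic when $d\equiv 3\pmod 4$ precisely because $\phi_d$ is then absolutely irreducible of degree $\geq 4$ while $\phi_{2^i+1}$ splits into linear forms. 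Two smaller repairs: the full splitting of the trivariate $\phi_{2^i+1}$ into linear forms does not follow from its restriction to $z=0$ (it is a known identity, due to Janwa--Wilson \cite{Janwa1993}, and should be cited as such), and the subcase $\deg(h)=3$, where $\phi_3=1$ and your descent degenerates, is deferred to ``a short separate argument'' that is never supplied.
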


and

\begin{theorem}
 {\tmstrong{(Delgado and Janwa \cite{Delgado2012})}}
For $k \geq 2$, let $f(x)=x^{2^i+1}+h(x) \in \mathbb{F}_q$ where $\text{deg}(h)=d \equiv 1  \pmod{4}  < 2^i+1.$ If $\phi_{2^i+1},\phi_d$ are relatively prime, then $f$ is not an exceptional APN function. 
\end{theorem}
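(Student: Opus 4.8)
The plan is to invoke Rodier's criterion, so that it suffices to produce an absolutely irreducible factor of $\phi$ defined over $\mathbb{F}_q$, and to argue by contradiction assuming none exists. The structural input I would set up first is that each $\phi_j$ is homogeneous of degree $j-3$, so that $\phi = \phi_{2^i+1} + a_d\phi_d + \sum_{j<d} a_j\phi_j$ is a sum of forms whose top piece is $\phi_{2^i+1}$ (degree $2^i-2$) and whose next nonzero piece is $a_d\phi_d$ (degree $d-3$). Because $\deg(h)=d<2^i+1$, there is a genuine gap: $\phi$ has no homogeneous part in degrees $d-2,\dots,2^i-3$, of total width $r:=2^i+1-d\geq 1$. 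This gap is precisely what will couple the leading behaviour (governed by $\phi_{2^i+1}$) to the subleading behaviour (governed by $\phi_d$), and hence couple the two hypotheses.

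Next I would exploit the assumed factorization. Since $\phi\in\mathbb{F}_q[x,y,z]$ has no absolutely irreducible factor over $\mathbb{F}_q$, its absolutely irreducible factors over $\overline{\mathbb{F}_q}$ fall into Galois orbits of size at least two; in particular I may write $\phi=AB$ over $\overline{\mathbb{F}_q}$ with $A$ a single irreducible factor and $B$ the product of the rest, both non-constant. Passing to leading forms gives $\mathrm{lead}(A)\,\mathrm{lead}(B)=\phi_{2^i+1}$, so the factorization of $\phi$ descends to a factorization of the Gold form. Writing $A=\mathrm{lead}(A)+A'$ and $B=\mathrm{lead}(B)+B'$ with $A',B'$ of strictly smaller degree, the gap forces the top part of $A'B+AB'$ to appear exactly $\min(a,b)$ degrees below the top, where $a,b$ are the degree gaps of $A,B$; matching to $a_d\phi_d$ yields $\min(a,b)=r$. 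In the \emph{unbalanced} case $a\neq b$ the degree-$(d-3)$ component of $AB$ is the single product $A'_{\mathrm{top}}\,\mathrm{lead}(B)$ (or symmetrically), so $\mathrm{lead}(B)\mid a_d\phi_d$. As $\mathrm{lead}(B)$ also divides $\phi_{2^i+1}$ and is non-constant, it is a common factor of $\phi_{2^i+1}$ and $\phi_d$, contradicting $\gcd(\phi_{2^i+1},\phi_d)=1$.

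The main obstacle I anticipate is the \emph{balanced} case $a=b=r$, where the degree-$(d-3)$ part is a sum
\[
a_d\phi_d = A'_{\mathrm{top}}\,\mathrm{lead}(B) + \mathrm{lead}(A)\,B'_{\mathrm{top}}
\]
rather than a single product, so a shared factor cannot be read off directly. To dispose of this case I expect to need the explicit factorization structure of $\phi_{2^i+1}$: restricting to $z=0$ exhibits it as a product of \emph{distinct} linear forms $\prod_{\gamma\neq 1}(x+\gamma y)$, so $\phi_{2^i+1}$ is squarefree with pairwise coprime absolutely irreducible factors permuted by Frobenius. Using this, together with the conjugacy relation $\mathrm{lead}(B)=\sigma(\mathrm{lead}(A))$ coming from the Galois action, I would argue that a symmetric splitting $\mathrm{lead}(A)\cdot\sigma(\mathrm{lead}(A))=\phi_{2^i+1}$ is incompatible with the degree and orbit data, or else forces, after reducing the displayed identity modulo an irreducible divisor of $\mathrm{lead}(A)$, a factor again shared with $\phi_d$; the hypothesis $d\equiv 1\pmod 4$ enters here to guarantee that $\phi_d\neq 0$ and that $a_d\phi_d$ is a true nonzero subleading term, so the reduction is non-degenerate. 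Once the balanced case is ruled out, the contradiction is complete and Rodier's criterion yields that $f$ is not exceptional APN.
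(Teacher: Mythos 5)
First, a point of reference: the paper itself gives no proof of this statement --- it is quoted verbatim from Delgado and Janwa \cite{Delgado2012} in the state-of-the-art section --- so your attempt can only be measured against the known proof strategy in the literature. Your skeleton is indeed the standard one (Rodier's criterion; factor $\phi=AB$ over $\overline{\mathbb{F}_q}$ using that no absolutely irreducible factor is defined over $\mathbb{F}_q$; pass to leading forms so that $\mathrm{lead}(A)\,\mathrm{lead}(B)=\phi_{2^i+1}$, a product of $2^i-2$ distinct linear forms; exploit the degree gap of width $r=2^i+1-d$), and your unbalanced case is correct: if the gaps $a\neq b$, the single product $A'_{\mathrm{top}}\,\mathrm{lead}(B)=a_d\phi_d$ forces a nonconstant common factor of $\phi_{2^i+1}$ and $\phi_d$, contradicting coprimality. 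One repair you do need even there: the claim that the first nonzero subleading component of $AB$ sits exactly $\min(a,b)$ below the top ignores possible cancellation when $a=b<r$; this must be excluded by the standard induction through the gap, where $A_{-k}\,\mathrm{lead}(B)+\mathrm{lead}(A)\,B_{-k}=0$ together with $\gcd(\mathrm{lead}(A),\mathrm{lead}(B))=1$ (squarefreeness of $\phi_{2^i+1}$) gives $\mathrm{lead}(A)\mid A_{-k}$, hence $A_{-k}=B_{-k}=0$ by degree. That is fixable, and it is precisely where the squarefreeness you mention is first needed --- not only in the endgame.

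The genuine gap is the balanced case $a=b=r$, which your sketch does not close, and the specific step you propose fails. Reducing $a_d\phi_d=A'_{\mathrm{top}}\,\mathrm{lead}(B)+\mathrm{lead}(A)\,B'_{\mathrm{top}}$ modulo a linear divisor $\ell$ of $\mathrm{lead}(A)$ gives $a_d\phi_d\equiv A'_{\mathrm{top}}\,\mathrm{lead}(B)\pmod{\ell}$; coprimality gives $\ell\nmid\phi_d$ and squarefreeness gives $\ell\nmid\mathrm{lead}(B)$, so the congruence is perfectly consistent and yields only $\ell\nmid A'_{\mathrm{top}}$ --- no shared factor of $\phi_{2^i+1}$ and $\phi_d$ is forced. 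The auxiliary relation $\mathrm{lead}(B)=\sigma(\mathrm{lead}(A))$ is also unjustified: $B$ is the product of \emph{all} remaining absolutely irreducible factors (every other conjugate of $A$, and possibly other Galois orbits), so it need not be a single conjugate, and no orbit bookkeeping in your sketch rules out a genuinely balanced splitting of the $2^i-2$ linear forms. Note the degree count: balancedness requires $\deg A,\deg B\geq r$, i.e. $d\geq 2^{i-1}+2$, which is exactly the range beyond Theorem 3 of Aubry--McGuire--Rodier (where $\deg g\leq 2^{i-1}+1$ kills the balanced case by pure degree counting); closing it for all $d<2^i+1$ is the substantive content of \cite{Delgado2012}, and requires additional analysis of deeper homogeneous components and specializations such as $y=z$ (as in Lemma 3 of the present paper), not just the depth-$r$ identity. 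Finally, your stated role for the hypothesis $d\equiv 1\pmod 4$ is incorrect: $\phi_d\neq 0$ for every odd $d\geq 3$, so nothing about nondegeneracy hinges on the congruence; its actual role is to delimit this theorem from the companion result for $d\equiv 3\pmod 4$, where $\phi_d$ is itself absolutely irreducible by Janwa--Wilson and no coprimality hypothesis is needed.
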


There also exist a result for polynomials of Kasami degree $ 2^{2 i} - 2^i + 1 $:

\begin{theorem}
  {\tmstrong{(F\'erard, Oyono and Rodier \cite{Ferard2012})}} Suppose $f \left( x \right) =
  x^{2^{2 i} - 2^i + 1 } + g \left( x \right)$ where $\deg \left( g
  \right) \leqslant 2^{2 k - 1} - 2^{k - 1} + 1$. Let $g \left( x \right) =
  \sum_{j = 0}^{2^{2 k - 1} - 2^{k - 1} + 1} a_j x^j .$ Suppose moreover that
  there exist a nonzero coefficient $a_j$ of g such that $\phi_j \left( x, y,
  z \right)$ is absolutely irreducible. Then $f$ is not an exceptional APN function.
\end{theorem}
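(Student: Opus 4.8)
The plan is to apply Rodier's criterion: it is enough to produce an absolutely irreducible component, defined over $\mathbb{F}_q$, of the surface $X$ given by $\phi(x,y,z)=0$. Writing $f(x)=x^{d}+g(x)$ with $d=2^{2i}-2^i+1$, the additivity of the numerator in $f$ yields
\[
\phi=\phi_{d}+\sum_{j} a_j\,\phi_j,
\]
where the sum ranges over the exponents $j$ with $a_j\neq 0$, all satisfying $j\leq\deg(g)\leq 2^{2k-1}-2^{k-1}+1<d$. Each summand coming from $g$ therefore has total degree at most $j-3<d-3=\deg(\phi_d)$, so the leading homogeneous form of $\phi$ is exactly $\phi_d$. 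Consequently the projective closure $\overline{X}\subset\mathbb{P}^3$ meets the plane at infinity in the plane curve $C_\infty$ defined by $\phi_{d}=0$, independently of $g$.

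The technical heart is then to determine the absolutely irreducible factorization of the Kasami form $\phi_{2^{2i}-2^i+1}$ over $\overline{\mathbb{F}_q}$. First I would establish that $\phi_d$ is a product of distinct absolutely irreducible factors, and I would identify their fields of definition together with their orbits under the Frobenius of $\mathbb{F}_q$. This description of $C_\infty$ controls everything downstream: every absolutely irreducible component of $X$ restricts at infinity to a union of these factors, and the Galois action on the components of $X$ is forced to be compatible with the known action on the factors of $\phi_d$.

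With the factorization in hand I would argue by contradiction. Suppose $X$ has no absolutely irreducible component defined over $\mathbb{F}_q$; then Frobenius permutes the absolutely irreducible components of $X$ without fixing any, and, by the previous paragraph, this permutation must respect the factor structure of $C_\infty$. At this point the hypothesis enters: a nonzero coefficient $a_j$ with $\phi_j$ absolutely irreducible contributes an $\mathbb{F}_q$-rational, absolutely irreducible piece to the geometry of $X$, and because the factors of $\phi_d$ are distinct one checks that a single component of $X$ carries this piece. That component is then necessarily Frobenius-stable, hence defined over $\mathbb{F}_q$, contradicting the assumption; by Rodier's criterion $f$ is not an exceptional APN function.

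I expect the factorization of $\phi_{2^{2i}-2^i+1}$ to be the main obstacle. In the Gold case the corresponding leading form splits transparently, but the Kasami form is markedly more delicate, and both the list of its absolutely irreducible factors and their Frobenius orbits must be pinned down precisely. The accompanying bookkeeping — matching each component of $X$ to the factors of $C_\infty$ it passes through, and thereby converting the no-fixed-point hypothesis into a genuine contradiction via the absolutely irreducible $\phi_j$ — is entirely contingent on getting that factorization right.
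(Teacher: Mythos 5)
A caveat first: the paper states this theorem as a quoted result of F\'erard--Oyono--Rodier \cite{Ferard2012} and contains no proof of it, so your attempt can only be measured against the strategy of the cited paper. Your opening moves do match that strategy: Rodier's criterion, the decomposition $\phi=\phi_d+\sum_j a_j\phi_j$ with $d=2^{2k}-2^k+1$ the Kasami exponent, the observation that the leading homogeneous form of $\phi$ is $\phi_d$, and the appeal to the Janwa--Wilson/Hernando--McGuire factorization of the Kasami form into pairwise distinct absolutely irreducible factors with known Frobenius action.

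The genuine gap is the step where, in your words, ``the hypothesis enters.'' You claim that a nonzero $a_j$ with $\phi_j$ absolutely irreducible ``contributes an $\mathbb{F}_q$-rational, absolutely irreducible piece to the geometry of $X$'' carried by a single component, which is then Frobenius-stable. But $\phi_j$ is merely one homogeneous component of $\phi$: the locus $\phi_j=0$ is not a subvariety of $X$, and it is invisible at infinity, where $\overline{X}$ meets the plane exactly in $\phi_d=0$ --- the lower-degree terms contribute nothing there. So there is no geometric ``piece'' for a component to carry, and your sketch supplies no mechanism producing a Frobenius-fixed component. In the actual proof the hypothesis on $\phi_j$ is used algebraically, not geometrically: assuming $\phi$ has no absolutely irreducible factor over $\mathbb{F}_q$, one writes $\phi=PQ$ with $P,Q$ nonconstant and coprime (legitimate because the leading form $\phi_d$ is squarefree, hence so is $\phi$), expands $P=\sum P_i$ and $Q=\sum Q_i$ in homogeneous parts with $P_pQ_q=\phi_d$ and $P_p,Q_q$ coprime, and exploits the vanishing of every homogeneous component of $\phi$ in degrees strictly between $j-3$ and $d-3$ to kill the intermediate parts of $P$ and $Q$ by descending induction, landing on an identity of the shape $a_j\phi_j=P_pQ_{j-3-p}+P_{j-3-q}Q_q$, which forces a nontrivial factor of $\phi_d$ to divide the absolutely irreducible $\phi_j$ --- impossible by degree comparison. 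This is exactly where the quantitative bound $\deg(g)\leqslant 2^{2k-1}-2^{k-1}+1$ is indispensable: it creates the long run of zero homogeneous components that drives the induction. Your proposal invokes the bound only to conclude $j<d$, i.e., only to identify the leading form; with the hypothesis weakened to $j<d$ the statement is not known to be true, and no contradiction can be extracted by your route. Until the ``one checks that a single component of $X$ carries this piece'' step is replaced by this homogeneous-component bookkeeping (or an equivalent), the argument does not close.
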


Rodier proved the following results in \cite{Rodier2011}. We recall that for any function
$f : \mathbb{F}_q \rightarrow \mathbb{F}_q$ we associate to $f$ the
polynomial $\phi \left( x, y, z \right)$ defined by:
\[ \phi \left( x, y, z \right) = \frac{f \left( x \right) + f \left( y \right)
   + f \left( z \right) + f \left( x + y + z \right)}{\left( x + y \right)
   \left( x + z \right) \left( y + z \right)} . \]
\begin{theorem}
  {\tmstrong{(Rodier \cite{Rodier2011})}} If the degree of a polynomial function $f$ is such that $\deg \left( f \right) = 4 e$ with $e \equiv 3  \pmod{4}
  $, and if the polynomials of the form \[ \left( x + y \right) \left( x + z \right) \left( y + z \right) + R, \] with
  \[ R \left( x, y, z \right) = c_1 \left( x^2 + y^2 + z^2 \right) + c_4
     \left( xy + xz + zy \right) + b_1 \left( x + y + z \right) + d_1, \]
  for $c_1, c_4, b_1, d \in \mathbb{F}_{q^3}$, do not divide $\phi$, then $f$ is not an exceptional APN function.
\end{theorem}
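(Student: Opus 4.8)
The plan is to reduce everything to Rodier's criterion stated in the introduction: to conclude that $f$ is not an exceptional APN function it suffices to produce an absolutely irreducible component of the surface $\phi(x,y,z)=0$ that is defined over $\mathbb{F}_q$. I would argue the contrapositive, assuming that $\phi$ has \emph{no} absolutely irreducible factor defined over $\mathbb{F}_q$ and deducing that one of the cubics $(x+y)(x+z)(y+z)+R$ must divide $\phi$.

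First I would pin down the homogeneous part of $\phi$ of top degree. Since we are in characteristic $2$, squaring is additive, so the numerator of $\phi_{2j}$ is the square of that of $\phi_j$, which yields the identity
\[
\phi_{2j}(x,y,z) = (x+y)(y+z)(z+x)\,\phi_j(x,y,z)^{2}.
\]
Iterating it twice gives $\phi_{4e} = [(x+y)(y+z)(z+x)]^{3}\,\phi_e^{4}$. As $\deg f = 4e$, the terms of $f$ of degree $<4e$ contribute to $\phi$ only in degree $<4e-3$, so the leading form of $\phi$ is $a_{4e}[(x+y)(y+z)(z+x)]^{3}\phi_e^{4}$ with $a_{4e}\neq 0$. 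Because $e\equiv 3\pmod 4$, the result of Janwa et al. gives that $\phi_e$ is absolutely irreducible; writing $\ell_1=x+y$, $\ell_2=y+z$, $\ell_3=z+x$, this is the full factorization of the leading form into absolutely irreducible forms, namely $\ell_1^3\ell_2^3\ell_3^3\phi_e^4$, each factor being defined over $\mathbb{F}_2\subseteq\mathbb{F}_q$.

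Next I would use Galois descent. Factor $\phi=\prod_i P_i$ into absolutely irreducible factors over $\overline{\mathbb{F}}_q$. Taking the leading form commutes both with multiplication and with the action of $\mathrm{Gal}(\overline{\mathbb{F}}_q/\mathbb{F}_q)$, so the product of the leading forms $P_i^\infty$ equals $\ell_1^3\ell_2^3\ell_3^3\phi_e^4$; in particular each $P_i^\infty$ is a monomial in the $\mathbb{F}_q$-rational forms $\ell_1,\ell_2,\ell_3,\phi_e$, hence is itself defined over $\mathbb{F}_q$. Therefore any two Galois-conjugate factors have the \emph{same} leading form. Under the standing assumption that no $P_i$ is defined over $\mathbb{F}_q$, the factors fall into Galois orbits of size $\geq 2$ whose members share a common leading form, and matching multiplicities against the exponent pattern $(3,3,3,4)$ severely constrains how the weight can be distributed.

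The crux is to run this matching to its only admissible end. Here I would bring in two further inputs: the $S_3$-symmetry of $\phi$, which permutes $\ell_1,\ell_2,\ell_3$ and fixes the symmetric form $\phi_e$, so that it acts on the set of factors compatibly with Galois; and the direct exclusion of the degenerate possibilities — non-rational factors of degree $1$ (leading form a single $\ell_i$), factors whose leading form is a proper power, or factors that absorb copies of $\phi_e$ together with the $\ell_i$ — by testing divisibility of $\phi$ against them and invoking the subleading terms coming from the lower-degree part of $f$. Carrying this out should leave, as the only configuration consistent with all orbits having size $\geq 2$ and with the symmetry, a single Galois orbit of three degree-$3$ factors, each with leading form $\ell_1\ell_2\ell_3=(x+y)(x+z)(y+z)$, defined over $\mathbb{F}_{q^3}$ and cyclically permuted by $\mathrm{Gal}(\mathbb{F}_{q^3}/\mathbb{F}_q)$. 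Finally, since $\phi$ is symmetric and the generic symmetric polynomial of degree $\leq 2$ is exactly $c_1(x^2+y^2+z^2)+c_4(xy+xz+zy)+b_1(x+y+z)+d_1$, each such cubic is forced into the shape $(x+y)(x+z)(y+z)+R$ with coefficients in $\mathbb{F}_{q^3}$, so a cubic of this form divides $\phi$, which is the desired conclusion. I expect the principal obstacle to be precisely this case analysis: pure leading-form bookkeeping underdetermines the distribution of the four copies of $\phi_e$ and of the $\ell_i$, so one must invoke the finer geometry of the surface (ruling out non-reduced intersections with the plane at infinity) and the subleading terms of $\phi$ to exclude the spurious configurations and to pin down the low-degree part $R$.
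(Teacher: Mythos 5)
Note first that the paper you are being compared against does not reprove this statement: it is quoted from Rodier \cite{Rodier2011}, and the paper only remarks (Section 4) that the hypothesis $e\equiv 3\pmod 4$ enters Rodier's proof solely through the absolute irreducibility of $\phi_e$. Measured against that argument, your skeleton is the right one: the reduction to the criterion of \cite{Rodier2008}, the identity $\phi_{2j}=(x+y)(y+z)(z+x)\phi_j^2$ giving the leading form $A^3\phi_e^4$ with $A=(x+y)(y+z)(z+x)$, the observation that every absolutely irreducible factor of $\phi$ has leading form $\ell_1^a\ell_2^b\ell_3^c\phi_e^d$, hence $\mathbb{F}_q$-rational and common to all Galois conjugates, and the matching against the exponent pattern $(3,3,3,4)$ with orbit sizes $\geq 2$ (which correctly forces, for each $\ell_i$, a single orbit of size $3$, exponent $1$, and no $\phi_e$-part) are all sound. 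But the proposal stops exactly at the crux ("carrying this out should leave\dots"), and where it does commit itself it commits to something false: you propose to \emph{exclude} the non-rational linear factors. That case cannot be excluded --- it is absorbed into the conclusion. Since $\ell_1+\ell_2+\ell_3=0$ in characteristic $2$, one has $(x+y+c)(y+z+c)(z+x+c)=A+c\left(x^2+y^2+z^2+xy+yz+zx\right)+c^3$, which is precisely a polynomial of the form $A+R$; so if an orbit of linear factors $\ell_i+\rho^k(c)$ occurs (and the $S_3$-symmetry of $\phi$ forces the three $\ell_i+c$ to occur together), their product already yields the desired divisor. Indeed the paper's own Proposition 1 shows the surviving $R$ is exactly $c_1\phi_5+c_1^3$, i.e. $A+R=\prod_i(\ell_i+c_1)$: the "cubic" in the extremal configuration \emph{is} a product of linear factors, and divisibility tests cannot kill it (e.g. $f(x)=(x^2+cx)^{2e}+\lambda x$ has $f(x)+f(x+c)$ constant, so $x+y+c$ divides its $\phi$).

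The second genuine gap is your last step, which is a non sequitur: symmetry of $\phi$ does not make an individual cubic factor symmetric, it only makes $S_3$ permute the three conjugate factors $A+R_1$, $A+R_2$, $A+R_3$ among themselves, and a priori one could have, say, $R_1(y,z,x)=\rho(R_1)(x,y,z)$ with $R_1$ not symmetric. An additional argument is needed, for instance: the action of $S_3$ on the variables commutes with the Galois action on coefficients, and $\mathrm{Gal}(\mathbb{F}_{q^3}/\mathbb{F}_q)$ acts on the three factors as a $3$-cycle; hence the induced homomorphism $S_3\to\mathfrak{S}_3$ lands in the centralizer of that $3$-cycle, which is cyclic of order $3$, and since $S_3$ has no nontrivial homomorphism to a group of order $3$, the action is trivial and each $R_i$ is symmetric --- only then is it of the form $c_1\left(x^2+y^2+z^2\right)+c_4\left(xy+xz+zy\right)+b_1\left(x+y+z\right)+d_1$ demanded by the statement. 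By contrast, your symmetry-plus-multiplicity idea does dispose of the remaining mixed case (an orbit of conics with leading form $\ell_i\ell_j$: a transposition produces a factor with leading form $\ell_j\ell_k$ whose Galois orbit overflows the multiplicity $3$ of $\ell_j$), but it too is only asserted, not executed. With the linear case rerouted into the conclusion rather than excluded, and the symmetry of $R$ actually proved, your outline would become a proof; as written, the decisive case analysis is declared rather than done.
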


There are more precise results for polynomials of degree 12.

\begin{theorem}
  {\tmstrong{(Rodier \cite{Rodier2011})}} If the degree of the polynomial $f$ defined over $\mathbb{F}_q$ is 12, then either $f$ is not an exceptional APN function or $f$ is CCZ-equivalent to the Gold function $x^3$.
\end{theorem}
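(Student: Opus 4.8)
The plan is to combine Rodier's reduction for degree $4e$ with $e\equiv 3\pmod 4$ (here $e=3$) with an explicit analysis of the divisibility it produces. First I would normalize $f$: scaling the output lets me assume $f$ is monic of degree $12$, and since any monomial $x^{2^i}$ contributes nothing to $\phi$ (because $x\mapsto x^{2^i}$ is additive, so $x^{2^i}+y^{2^i}+z^{2^i}+(x+y+z)^{2^i}=0$), I may delete the terms of degree $1,2,4,8$ and the constant, each of which only changes $f$ by an $\mathbb F_2$-linearized summand and hence preserves the CCZ-class. Thus I work with
\[ f(x)=x^{12}+a_{11}x^{11}+a_{10}x^{10}+a_9x^9+a_7x^7+a_6x^6+a_5x^5+a_3x^3 . \]
The goal is to show that the divisibility condition forces $a_{11}=a_{10}=a_9=a_7=a_5=0$ and $R$ constant, so that $f=x^{12}+a_6x^6+a_3x^3=L(x^3)$ with $L(u)=u^4+a_6u^2+a_3u$ an $\mathbb F_2$-linearized polynomial; as an APN function of this shape forces $L$ to be a permutation, $f$ is then EA-, hence CCZ-, equivalent to $x^3$.

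Write $D=(x+y)(y+z)(z+x)$. The key structural fact is that each $\phi_j$ is homogeneous of degree $j-3$ and that in characteristic $2$ one has $\phi_{2k}=D\,\phi_k^2$; in particular $\phi_3=1$, $\phi_6=D$, $\phi_{12}=D^3$ and $\phi_{10}=D\phi_5^2$. Hence
\[ \phi=D^3+a_{11}\phi_{11}+a_{10}D\phi_5^2+a_9\phi_9+a_7\phi_7+a_6D+a_5\phi_5+a_3, \]
a sum of homogeneous pieces in degrees $9,8,7,6,4,3,2,0$. By the cited theorem, if $f$ is exceptional APN then $\phi=(D+R)Q$ for some $R=c_1(x^2+y^2+z^2)+c_4(xy+xz+zy)+b_1(x+y+z)+d_1$ with $c_1,c_4,b_1,d_1\in\mathbb F_{q^3}$ and $Q$ of degree $6$. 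Matching homogeneous components of $(D+R)Q$ against $\phi$ turns the problem into a finite cascade of polynomial identities.

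I would run this cascade from the top. Degree $9$ gives that the leading form of $Q$ is $D^2$. Degree $8$ gives $D\mid a_{11}\phi_{11}$; restricting to the plane $x=y$ shows $\phi_{11}(x,x,z)=\big((x^5+z^5)/(x+z)\big)^2\neq 0$, so $D\nmid\phi_{11}$ and therefore $a_{11}=0$. Continuing downward, each step either determines a homogeneous part of $Q$ or leaves a residual divisibility by $D$; evaluating these residuals on $x=y$, where $\phi_9(x,x,z)=(x+z)^6$ and $\phi_5(x,x,z)=(x+z)^2$ and $R|_{x=y}$ is explicit, converts them into two-variable identities constraining $a_9,a_{10},a_7,a_5$ and $c_1,c_4,b_1,d_1$. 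For instance the degree-$6$ step yields, on $x=y$, $a_9(x+z)^6+a_{10}r_2(x+z)^4+r_2^3\equiv 0$ with $r_2$ the degree-$2$ part of $R$; coefficient comparison forces either $c_1=c_4=0$ (whence $a_9=0$) or $c_1=c_4\neq 0$ with $a_9$ determined.

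The main obstacle is precisely this branching: the nonzero branches (where $r_2\neq 0$) are not killed by a single identity, and must be eliminated by pushing the cascade through the remaining degrees $7,5,4,3,2,1,0$ and, where the divisibility alone is insufficient, by invoking the exceptional-APN hypothesis together with the requirement that the three Galois conjugates $D+R^{(q^i)}$ multiply to the $\mathbb F_q$-rational polynomial $\phi$. I expect consistency across all of these to force $r_2=0$, $b_1=0$, and $R=d_1$ constant, reducing $\phi$ to the cubic-in-$D$ expression $D^3+a_6D+a_3$, which is exactly the $\phi$ of $L(x^3)$. Collecting the branches then yields the dichotomy of the statement: if the identities are inconsistent for every admissible $R$ then $\phi$ has no such cubic factor and $f$ is not exceptional APN, while otherwise the surviving solutions all reduce $f$ to $L(x^3)$, that is, to a function CCZ-equivalent to $x^3$.
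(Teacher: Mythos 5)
The paper itself does not prove this theorem: it is quoted from Rodier \cite{Rodier2011}, and the machinery the paper develops (theorem \ref{divisionCondition}, Proposition 1, and the CCZ step of Section 5) deliberately excludes $e=3$, since for degree $12$ the conclusion must be a dichotomy rather than a contradiction. Your overall strategy --- normalize away additive monomials, force $\phi=\left( A+R \right)\left( A+\rho\left( R \right) \right)\left( A+\rho^{2}\left( R \right) \right)$ up to a constant (for $d=12$ the cofactor $Q$ is a constant, as $\deg\phi = 9$), run a degree-by-degree cascade of homogeneous identities evaluated on $y=z$, and recognize the survivors as compositions of $x^{3}$ with linearized permutations --- is exactly Rodier's method and the one this paper generalizes, so the architecture of your proposal is sound.

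The genuine gap is your expectation that the cascade forces $r_{2}=0$, $b_{1}=0$ and $R=d_{1}$ constant, so that the only survivors are $f=x^{12}+a_{6}x^{6}+a_{3}x^{3}=L\left( x^{3} \right)$ (outer composition). This is false, and it is contradicted both by the paper's Proposition 1, whose generic survivor is the \emph{non-constant} $R=c_{1}\phi_{5}+c_{1}^{3}$ with $\tmop{tr}\left( c_{1} \right)=0$, and by explicit examples: take $M\left( x \right)=x\left( x+c_{1} \right)\left( x+\rho\left( c_{1} \right) \right)\left( x+\rho^{2}\left( c_{1} \right) \right)=x^{4}+q_{1}\left( c_{1} \right)x^{2}+N\left( c_{1} \right)x$ with $\tmop{tr}\left( c_{1} \right)=0$, $c_{1}\neq 0$. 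Then $f=M\left( x \right)^{3}$ is exceptional APN of degree $12$ (it is $x^{3}$ precomposed with a linearized permutation), it has the nonzero coefficients $a_{10}=q_{1}\left( c_{1} \right)$ and $a_{9}=N\left( c_{1} \right)$ surviving your normalization (the monomials $x^{10},x^{9}$ are not additive), and its $\phi$ is divisible by $A+c_{1}\phi_{5}+c_{1}^{3}$, i.e.\ precisely the branch $c_{1}=c_{4}\neq 0$ that your degree-$6$ step detects and that you plan to eliminate ``by consistency.'' Every identity in your cascade is consistent on these functions, so the elimination cannot succeed; a correct proof must carry the $c_{1}\neq 0$ branch to the end and identify it as $\phi_{L^{3}}$ dividing $\phi$, i.e.\ $f\equiv L\left( x \right)^{3}$ modulo additive monomials --- inner composition, which is not of your outer form $L\left( x^{3} \right)$ modulo additive polynomials. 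Conversely, the constant-$R$ branch does survive when $e=3$: the paper's endgame step for $e>3$, which kills $d_{1}$ via $a_{d-9}\phi_{d-9}=A^{3}Q_{d-21}+N\left( d_{1} \right)\phi_{e}^{4}$, degenerates at $d=12$ ($\phi_{3}=1$) to the consistent relation $a_{3}=N\left( d_{1} \right)$, matching $f=L\left( x^{3} \right)$. So the correct classification contains at least both families, each CCZ-equivalent to $x^{3}$; as written, your argument would miss the inner family and, worse, would wrongly conclude non-exceptionality for genuinely exceptional functions such as $M\left( x \right)^{3}$. (A smaller point: you should also treat the case where $R$ is fixed by $\rho$, hence $A+R$ is defined over $\mathbb{F}_{q}$; there the first alternative of the dichotomy must come from the absolute irreducibility of that cubic factor, not from the conjugate-product identity.)
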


\section{Our main Result}
\label{sec:2}

The goal of this paper is to prove the following result: 

\begin{theorem}
Let $f: \mathbb{F}_q \rightarrow \mathbb{F}_q$ of degree $4e$ with $e>3$ such that $\phi_e$ is absolutely irreducible. Then $f$ is not an exceptional APN function.
\end{theorem}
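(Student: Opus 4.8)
The plan is to use Rodier's criterion: it suffices to show that the surface $X$ defined by $\phi(x,y,z)=0$ has an absolutely irreducible component defined over $\mathbb{F}_q$. After scaling (which preserves both the zero set and the APN property) we may assume the coefficient of $x^{4e}$ in $f$ equals $1$, so that $\phi$ is a polynomial of degree $4e-3$ defined over $\mathbb{F}_q$ whose homogeneous leading form is $\phi_{4e}$.

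First I would compute this leading form. Writing $D=(x+y)(y+z)(z+x)$ and using that squaring is additive in characteristic $2$, one gets the identity $\phi_{2k}=D\,\phi_k^2$, since the numerator of $\phi_{2k}$ is the square of that of $\phi_k$. Applying it twice yields
\[ \phi_{4e}=D\,\phi_{2e}^2=D\,(D\,\phi_e^2)^2=D^3\,\phi_e^4. \]
For $e>3$ the polynomial $\phi_e$ has degree $e-3\geq 2$ and, being absolutely irreducible, is coprime to each of the three linear forms $x+y$, $y+z$, $z+x$. Thus the leading form factors into distinct absolutely irreducible pieces as $\phi_e^4$ times the three linear forms each cubed, and---this is what makes the degree $4e$ case genuinely harder than the odd or $2e$ cases---no factor occurs with multiplicity one, so the classical ``simple factor'' criterion does not apply.

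Next I would argue by contradiction. Suppose $\phi$ has no absolutely irreducible factor over $\mathbb{F}_q$ and decompose it over $\overline{\mathbb{F}}_q$ into absolutely irreducible factors; the Galois group permutes these factors in orbits, each necessarily of size $\geq 2$. Since the leading form of a product is the product of leading forms, the leading forms of the factors multiply to $\phi_{4e}=D^3\phi_e^4$. Because $\phi_e$ and the three linear forms are all defined over $\mathbb{F}_2$, Galois preserves the multiplicity with which $\phi_e$ divides the leading form of any factor. Letting $S$ be the (Galois-stable) set of factors whose leading form is divisible by $\phi_e$, with multiplicities $a_P\geq 1$ summing to $4$ and constant on orbits, a short counting argument shows that the only possibilities compatible with all orbits having size $\geq 2$ are $|S|=2$ with multiplicities $(2,2)$ and $|S|=4$ with multiplicities $(1,1,1,1)$; in particular $\phi_e$ cannot be concentrated in a single factor, for that factor would then be Galois-invariant.

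It remains to exclude these two configurations, and this is where the real work lies. The factors carrying the linear forms $D^3$ are exactly the cubic surfaces with leading form $D$ isolated in Rodier's theorem; I would dispose of them by invoking and adapting his analysis of the divisors $(x+y)(y+z)(z+x)+R$, after which the factors carrying $\phi_e$ have leading form a pure power of $\phi_e$ and one is reduced to a polynomial $\Psi$ defined over $\mathbb{F}_q$ with leading form exactly $\phi_e^4$ and the same configurations $(2,2)$ and $(1,1,1,1)$ for its conjugate factors. The hard part will be to rule these out: here leading forms alone are insufficient, and I would pass to the next homogeneous components. Two conjugate factors with leading form $\phi_e^2$ (respectively four with leading form $\phi_e$) force the absolutely irreducible curve $\{\phi_e=0\}$ at infinity to lie in each factor's section with the plane at infinity, so their intersection is a curve defined over $\mathbb{F}_q$ containing $\{\phi_e=0\}$ with high multiplicity. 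Comparing the resulting sub-leading forms---which are symmetric functions of the conjugate factors, hence forced to be rational and to match the values determined by the coefficients of $f$---against the constraint imposed by the absolute irreducibility of $\phi_e$ should produce the required contradiction, showing that such a shared configuration can occur only if $\phi_e$ fails to be absolutely irreducible. This finer matching is the main obstacle and the technical heart of the argument.
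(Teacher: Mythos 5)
Your computation of the leading form ($\phi_{4e}=A^3\phi_e^4$ with $A=(x+y)(y+z)(z+x)$, matching the paper's $\phi_d=A^3\phi_e^4$) and your orbit-counting reducing to the $(2,2)$ and $(1,1,1,1)$ configurations are fine, but the proposal breaks at the step where you claim to ``dispose of'' the cubic factors with leading form $A$ by ``invoking and adapting'' Rodier's analysis. Rodier's theorem (extended in this paper as Theorem \ref{divisionCondition}) is conditional: \emph{if} no polynomial $(x+y)(x+z)(y+z)+R$, with $R$ quadratic symmetric over $\mathbb{F}_{q^3}$, divides $\phi$, \emph{then} $f$ is not exceptional APN. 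It does not exclude such divisors, and they genuinely occur: for $L(x)=x(x+c_1)(x+\rho(c_1))(x+\rho^2(c_1))$ with $\mathrm{tr}(c_1)=0$ and $c_1\notin\mathbb{F}_q$, the function $f(x)=L(x)^e+S(L(x))$ has $\phi_f$ divisible by $(A+R)(A+\rho(R))(A+\rho^2(R))$, a Galois orbit of three conjugate cubics none of which is defined over $\mathbb{F}_q$. For such $f$ no contradiction can be extracted from the factorization geometry of $\phi$, since $\phi$ may well have no absolutely irreducible component over $\mathbb{F}_q$ at all --- and yet the theorem must cover these $f$. So your endgame, the ``finer matching'' of sub-leading forms that ``should produce the required contradiction,'' is not merely left unproven at the admitted hard spot: it aims at a statement (every such $\phi$ has an absolutely irreducible $\mathbb{F}_q$-component) that is stronger than the theorem and is precisely what the paper avoids claiming.

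The paper's route through this residual case is entirely different and constitutes the actual proof. Assuming some $A+R$ divides $\phi$ (hence the full conjugate product $P$ divides $\phi$), a degree-by-degree identification of the homogeneous components in $P\cdot Q=\phi$ (Proposition 1, working through degrees $d-3$ down to $d-9$ with the trace, norm and the forms $q_1,q_4,q_5$) forces $R=c_1\phi_5+c_1^3$ with $\mathrm{tr}(c_1)=0$, so that $P$ is exactly $\phi_{L^3}$, the $\phi$-polynomial associated to $L(x)^3$. Then an affine-dimension count (Theorem 11: the set $G$ of polynomials $L(x)^e+S(L(x))$ sits inside the set $F$ of degree-$4e$ polynomials with $\phi_{L^3}\mid\phi$, and an injective coefficient map bounds $\dim F\le e-\delta$, giving $F=G$) shows every such $f$ is, up to monomials of $2$-power exponent, of the form $L(x)^e+S(L(x))$, hence CCZ-equivalent to $x^e+S(x)$ of odd degree $e$ not a Gold or Kasami number; by Theorem 1 this is not exceptional APN, the desired contradiction. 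Your proposal contains no counterpart of either step --- neither the determination of $R$ nor the CCZ reduction --- and these cannot be replaced by leading-form and irreducibility arguments alone.
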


The proof of this theorem is decomposed in two main steps. The first one is to show that the exceptional APN functions of degree as in the conditions of theorem 9 must be of a certain form. The second one is to prove that they are hence CCZ-equivalent to a nonexceptional APN function, which is a contradiction.

\section{The divisibility condition}
\label{sec:3}
In the statement of theorem 7 in \cite{Rodier2011} the condition that $e$ must be  $3 \pmod{4}$ is only used to guarantee that $\phi_e$ is absolutely irreducible  (as shown in \cite{Janwa1993}). It is easy to see that the proof works whenever $e$ is such that $\phi_e$ is absolutely irreducible. As a consequence of this remark theorem 7 can be directly extended as follow:

\begin{theorem} \label{divisionCondition}
Let $f: \mathbb{F}_q \rightarrow \mathbb{F}_q$ be of degree $d = 4 e$ with $e$ such that $\phi_e$ is absolutely irreducible. If the polynomials of the form
  \[ \left( x + y \right) \left( x + z \right) \left( y + z \right) + R(x,y,z), \]
  with
  \[ R \left( x, y, z \right) = c_1 \left( x^2 + y^2 + z^2 \right) + c_4
     \left( xy + xz + zy \right) + b_1 \left( x + y + z \right) + d, \]
  for $c_1, c_4, b_1, d_1 \in \mathbb{F}_{q^3}$, does not divide $\phi$ then $f$
  is not an exceptional APN function.
\end{theorem}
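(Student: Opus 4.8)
The plan is to obtain the statement directly from Rodier's Theorem~7 by auditing its proof. The present theorem has precisely the same conclusion and the same divisibility hypothesis as Theorem~7; the only difference is that the congruence $e \equiv 3 \pmod{4}$ is replaced by the weaker requirement that $\phi_e$ be absolutely irreducible. As the remark preceding the statement indicates, it therefore suffices to verify that in \cite{Rodier2011} the congruence is used for one purpose only, namely to invoke the result of \cite{Janwa1993} guaranteeing that $\phi_e$ is absolutely irreducible. Once this is confirmed, substituting our hypothesis for the congruence leaves every other line of the argument untouched.

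First I would recall the geometric setup. To $f$ one associates the affine surface $X$ defined by $\phi(x,y,z)=0$, and by Rodier's criterion \cite{Rodier2008} it is enough to exhibit an absolutely irreducible component of the projective closure $\bar{X}$ that is defined over $\mathbb{F}_q$. Writing $f=\sum_j a_j x^j$ with $a_{4e}\neq 0$, one has $\phi=\sum_j a_j \phi_j$ with $\deg \phi_j = j-3$, so the homogeneous part of top degree $4e-3$ equals $a_{4e}\phi_{4e}$; this part governs the intersection $\bar{X}\cap H_\infty$ with the plane at infinity. The structural input is the characteristic-$2$ identity: since squaring is additive, the numerator of $\phi_{2i}$ is the square of that of $\phi_i$ while the denominator is unchanged, whence $\phi_{2i}=\phi_i^2\,(x+y)(y+z)(z+x)$. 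Iterating twice gives
\[ \phi_{4e}=\phi_e^{4}\,\big((x+y)(y+z)(z+x)\big)^{3}. \]
Thus $\bar{X}\cap H_\infty$ is supported on the curve $\phi_e=0$ (with multiplicity $4$) together with the three lines $x+y=0$, $y+z=0$, $z+x=0$ (each with multiplicity $3$). This is exactly where absolute irreducibility of $\phi_e$ is used: it ensures that the component $\{\phi_e=0\}$ of the section at infinity does not split further, which is the only property of $e$ that the rest of the argument needs.

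Next I would reproduce the component analysis of \cite{Rodier2011} verbatim. Assume, for contradiction, that $\bar{X}$ has no absolutely irreducible component defined over $\mathbb{F}_q$. One then studies how the components of $\bar{X}$ meet $H_\infty$; because $\phi_e$ is absolutely irreducible of degree $e-3$ and appears to multiplicity $4$, while the three lines appear to multiplicity $3$, a Bezout and multiplicity bookkeeping severely limits the admissible configurations. Carrying this through forces a cubic factor of the form $(x+y)(x+z)(y+z)+R$, with $R$ of the displayed shape and coefficients in $\mathbb{F}_{q^3}$ (the field over which such a factor, not individually rational over $\mathbb{F}_q$, and its Frobenius conjugates are defined), to divide $\phi$. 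Contrapositively, if no polynomial of this form divides $\phi$, then $\bar{X}$ must possess an absolutely irreducible component over $\mathbb{F}_q$, and Rodier's criterion yields that $f$ is not an exceptional APN function.

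The main obstacle—indeed the only content beyond citation—is the audit itself: one must check, step by step, that every degree estimate, multiplicity count, and irreducibility assertion in \cite{Rodier2011} depends on $e$ solely through the absolute irreducibility of $\phi_e$, and never through the arithmetic of $e \bmod 4$ in any other guise. I expect this to be routine precisely because the congruence enters the original proof only via the reference to \cite{Janwa1993}; after the factorization of $\phi_{4e}$ above is in hand, $e$ plays no further role except through $\deg \phi_e$ and the irreducibility of $\phi_e$, both of which are unchanged under our more general hypothesis.
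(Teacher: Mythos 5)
Your proposal is correct and coincides with the paper's own argument: the paper proves this theorem exactly by the remark that in Rodier's Theorem~7 the hypothesis $e \equiv 3 \pmod{4}$ enters only through the result of Janwa--Wilson guaranteeing absolute irreducibility of $\phi_e$, so the proof carries over verbatim under the weaker hypothesis. Your additional sketch (the factorization $\phi_{4e} = \phi_e^4\,\bigl((x+y)(y+z)(z+x)\bigr)^3$ and the analysis of $\bar{X} \cap H_\infty$) is an accurate summary of the audited steps and is consistent with, though more detailed than, what the paper itself writes.
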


\begin{remark}
As said in the introduction, $\phi_e$ is absolutely irreducible in many cases including $e \equiv 3 \pmod{4}$.
\end{remark}

\begin{remark}
Among the examples where $\phi_e$ is not absolutely irreducible, we would like to draw attention on two particular cases. Firstly, one can quickly verify that $\phi_e$ is not irreducible when $e$ is even (see \cite{Aubry2010} lemma 2.2). Secondly, when $e$ is a Gold or a Kasami exponent there exists a decomposition of $\phi_e$ into absolutely irreducible factors (see \cite{Janwa1993}).
\end{remark}

We will now investigate the consequences of the last theorem.

Let $f : \mathbb{F}_q \rightarrow \mathbb{F}_q$ be a function of degree $d=4
e$ where $e >3 $ is odd and such that $\phi_e$ is absolutely irreducible. Suppose now that $f$ is an exceptional APN function. We recall that
\[ \phi \left( x, y, z \right) = \frac{f \left( x \right) + f \left( y
   \right) + f \left( z \right) + f \left( x + y + z \right)}{\left( x + y
   \right) \left( y + z \right) \left( z + x \right)}, \]
Writing $f\left( x \right)=\sum_{i=0}^d a_i x^i$ we have
\[ \phi_f = \sum_{i = 0}^d a_i \phi_i, \]
We can fix $a_d$ to 1 without loss of generality as $\mathbb{F}_q$ is a field.

Let $\rho $ be a generator of the Galois group $ \tmop{Gal} \left( \mathbb{F}_{q^3} /\mathbb{F}_q. \right)$ and let us consider \linebreak $c_1, c_4, b_1, d_1 \in \mathbb{F}_{q^3}$, $R \left( x, y, z \right)
= c_1 \left( x^2 + y^2 + z^2 \right) + c_4 \left( xy + xz + zy \right) + b_1
\left( x + y + z \right) + d$ and $A = \left( x + y \right) \left( y + z
\right) \left( z + x \right)$. \\

 As a consequence of theorem \ref{divisionCondition}, we may assume that the polynomial \linebreak $P=\left( A + R \right) \left( A + \rho \left( R \right) \right) \left( A + \rho^2 \left( R \right) \right)$ divides $\phi$. We denote $P_i$ the homogeneous component of degree $i$ of $P$.
As $\phi$ is of total degree $d - 3$, there exists a polynomial $Q \in
\mathbb{F}_{q^3} \left[ x, y, z \right]$ of total degree $d - 12$ such that $\phi=P \times Q$. Denoting $Q_i$ the homogeneous component of $Q$ of degree $i$ we get
\[ \sum_{i = 0}^9 P_i \cdot \sum_{i = 0}^{d - 12} Q_i = \sum_{i = 0}^d a_i
   \phi_i . \]
As $\phi$ is a symmetrical polynomial in $x, y, z$ we can write it using symmetrical functions $s_1 = x + y + z$, $s_2 = xy + xz + yz$ and $s_3 = xyz$ (see \cite{Bourbaki} chapter 6).
Denoting $p_i = x^i + y^i + z^i$, we have $p_i = s_1 p_{i - 1} + s_2 p_{i - 2}
+ s_3 p_{i - 3} \nocomma$. We remark that $\phi_i = \frac{p_i + s_1^i}{A}$ and
that $A = \left( x + y \right) \left( y + z \right) \left( z + y \right) = s_1
s_2 + s_3$.

We shall now determine all the coefficients of $R$ identifying degree by degree $P$, $Q$ and $\phi$.

\begin{proposition}
  If $A+R$ divides $\phi_f$, then $R = c_1 \phi_5 + c_1^3$ and the trace of $c_1$
  in $\mathbb{F}_{q^3}$ is 0. Moreover the polynomial $\left( A + R \right) \left( A
+ \rho \left( R \right) \right) \left( A + \rho^2 \left( R \right) \right)$ is equal to 
\[
\frac{L \left( x \right)^3 + L \left( y \right)^3 + L \left( z \right)^3 +
   L \left( x + y + z \right)^3}{\left( x + y \right) \left( y + z \right)
   \left( z + x \right)}
\]
where $L \left( x \right) = x \left( x + c_1 \right) \left( x + \rho\left( c_1 \right) \right) \left( x + \rho^2 \left( c_1 \right) \right)$.
\end{proposition}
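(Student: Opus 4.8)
The plan is to convert the divisibility hypothesis into an identity between symmetric polynomials and then read off the coefficients of $R$ degree by degree. Since $\phi$, $A$ and $R$ are all $S_3$-invariant, I would work in $\mathbb{F}_{q^3}[s_1,s_2,s_3]$; because $A+R=s_3+\bigl(s_1s_2+R(s_1,s_2)\bigr)$ is monic of degree $1$ in $s_3$ and the cofactor $\phi/(A+R)$ is again symmetric, the condition $A+R\mid\phi$ is equivalent to the vanishing of the remainder, i.e.\ to $\phi(s_1,s_2,\,s_1s_2+R)=0$ (which amounts to the substitution $A\mapsto R$). With $A=s_1s_2+s_3$, Newton's recursion gives the clean form $\phi_i=s_1\phi_{i-1}+s_2\phi_{i-2}+s_3\phi_{i-3}+s_1^{i-3}$, from which I would record the facts that drive everything: $\phi_{2j}=A\,\phi_j^2$ (so $\phi_{4e}=A^3\phi_e^4$), and, writing $\psi_i:=\phi_i\bmod A$, that $\psi_i=0$ for even $i$ while $\psi_i=\dfrac{s_1^{\,i-1}+s_2^{(i-1)/2}}{s_1^2+s_2}$ for odd $i$; in particular $\gcd(A,\phi_i)=1$ for odd $i$, and $\gcd(A,\phi_e)=1$.

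Writing $\overline{\phi}_i$ for the image of $\phi_i$ under $A\mapsto R$ (whose leading weighted part, for the weights $1,2$ on $s_1,s_2$, is $\psi_i$), the hypothesis becomes $\sum_i a_i\overline{\phi}_i=0$, which I would match weighted-degree by weighted-degree from the top. The leading part is $\psi_{4e}=0$ automatically; the next part forces $a_{4e-1}=0$; and at weighted degree $4e-6$ the only surviving contributions come from $\overline{\phi}_{4e}=R^3\overline{\phi}_e^4$, $\overline{\phi}_{4e-2}=R\,\overline{\phi}_{2e-1}^2$ and $\overline{\phi}_{4e-3}$. Setting $u=s_1^2$, $v=s_2$, the explicit formula for $\psi_i$ shows that $\psi_e^4$, $\psi_{2e-1}^2$ and $\psi_{4e-3}$ all carry the common factor $u^{2e-2}+v^{2e-2}$; cancelling it removes the $e$-dependence and leaves the fixed cubic
\[
(c_1u+c_4v)^3+a_{4e-2}(c_1u+c_4v)(u+v)^2+a_{4e-3}(u+v)^3=0 .
\]
Comparing the coefficients of $u^3$ and $uv^2$ gives $c_1(c_1+c_4)^2=0$, hence $c_4=c_1$ (the branch $c_1=0$ is degenerate and is removed by continuing the same descent).

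I would then push the expansion of $\phi_i(s_1,s_2,R)$ one and two weighted degrees further, where the lower-order terms $b_1s_1$ and $d$ of $R$ first appear; the same telescoping factor $u^{2e-2}+v^{2e-2}$ recurs, so the free coefficients $a_{4e-2},a_{4e-3},\dots$ again drop out and one is left with small fixed identities forcing $b_1=0$ and $d=c_1^3$. This is precisely $R=c_1\phi_5+c_1^3$, since $\phi_5=s_1^2+s_2$. I expect this descent to be the main obstacle: one must expand $\overline{\phi}_i$ to subleading weighted order and keep careful track of which indices contribute at each level, verifying at every step that the undetermined lower coefficients of $f$ cancel exactly as they did in the cubic above.

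For the trace condition I would use that $\phi$ and $A$ are defined over $\mathbb{F}_q$: applying $\rho$ to $A+R\mid\phi$ shows $A+\rho(R)$ and $A+\rho^2(R)$ also divide $\phi$, so (the three factors being coprime once $c_1\neq0$) $P=\prod_\sigma(A+\sigma R)\mid\phi$, say $\phi=PQ$ with $\deg Q=4e-12$. Matching the two top components, $P_9=A^3$ gives $Q_{4e-12}=\phi_e^4$, while $P_8=A^2\bigl(\mathrm{Tr}(c_1)s_1^2+\mathrm{Tr}(c_4)s_2\bigr)$: since $a_{4e-1}=0$ and $\gcd(A,\phi_e)=1$, its degree-$2$ factor must be divisible by $A$ and hence vanish, and with $c_4=c_1$ this reads $\mathrm{Tr}(c_1)(s_1^2+s_2)=0$, so $\mathrm{Tr}(c_1)=0$. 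Finally, for $R=c_1(s_1^2+s_2)+c_1^3$, evaluating $\prod_k(t+x_k)=t^3+s_1t^2+s_2t+s_3$ at $t=s_1+c_1$ gives $A+R=(x+y+c_1)(x+z+c_1)(y+z+c_1)$, and conjugating, $P=\prod_{\{i,j\}}\prod_\sigma(x_i+x_j+\sigma c_1)=\prod_{\{i,j\}}\tfrac{L(x_i+x_j)}{x_i+x_j}$ with $L(t)=t(t+c_1)(t+\rho c_1)(t+\rho^2c_1)$. Since $\mathrm{Tr}(c_1)=0$, $L(t)=t^4+\sigma_2(c_1)t^2+N(c_1)t$ is additive, so $L(x_i+x_j)=L(x_i)+L(x_j)$, and the identity $a^3+b^3+c^3+(a+b+c)^3=(a+b)(b+c)(c+a)$ yields
\[
P=\frac{L(x)^3+L(y)^3+L(z)^3+L(x+y+z)^3}{(x+y)(y+z)(z+x)} ,
\]
as claimed.
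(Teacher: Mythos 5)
Your opening reduction is valid and is a genuinely different organization from the paper's: since $A+R=s_3+(s_1s_2+R)$ is monic of degree $1$ in $s_3$ and the cofactor of a symmetric polynomial by a symmetric polynomial is again symmetric, divisibility is indeed equivalent to the vanishing of $\phi$ under $A\mapsto R$, whereas the paper works with the product $P=(A+R)(A+\rho(R))(A+\rho^2(R))$ and matches homogeneous components of $PQ=\phi$. Your supporting facts check out: the recursion $\phi_i=s_1\phi_{i-1}+s_2\phi_{i-2}+s_3\phi_{i-3}+s_1^{i-3}$, the formula for $\phi_i\bmod A$ (it is the symmetric-function form of the paper's Lemma \ref{eodd}), the conclusion $a_{4e-1}=0$, and the cubic at weighted degree $4e-6$ forcing $c_4=c_1$ (in fact the $u^2v$ and $v^3$ coefficients also give $c_4(c_1+c_4)^2=0$, so $c_1=c_4$ follows even in the branch $c_1=0$, making your parenthetical unnecessary there). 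Your trace step coincides with the paper's degree-$(d-4)$ argument, and your closing verification — $A+R=(x+y+c_1)(y+z+c_1)(z+x+c_1)$ by evaluating $t^3+s_1t^2+s_2t+s_3$ at $t=s_1+c_1$, then additivity of $L$ and the identity $a^3+b^3+c^3+(a+b+c)^3=(a+b)(b+c)(c+a)$ — is actually a cleaner proof of the final displayed equality than the paper's ``one can verify.''

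The genuine gap is the middle of the argument, which you leave as a declared plan rather than a proof: the determination $b_1=0$ and $d=c_1^3$, which is where the paper spends almost all of its effort. Moreover, the one mechanism you do assert — that ``the free coefficients $a_{4e-2},a_{4e-3},\dots$ again drop out'' — fails already at the next weighted degree $4e-7$. Writing $\phi_{4e-3}=\psi_{4e-3}+A\,g_{4e-3}$, the substitution turns the correction into $a_{4e-3}R\,\overline{g_{4e-3}}$, whose top part a priori has weight exactly $2+(4e-9)=4e-7$, and the cross-term $a_{4e-2}\,b_1s_1\psi_{2e-1}^2$ enters at the same weight; so the lower-weight identities are contaminated both by unknown coefficients of $f$ and by \emph{second-order} $A$-adic data of the $\phi_i$ — precisely the content of the paper's Lemmas \ref{e3mod4} and \ref{e1mod4} (expansions of $\phi_e$ to order $s^3$, split according to $e\bmod 4$) and of the lemma $Q_{d-14}(x,z,z)=0$, none of which you have analogues for. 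The paper then needs the levels $d-5$ through $d-8$ to extract the nine relations $\mathrm{tr}(b_1)=q_1(b_1)=q_5(c_1,b_1)=q_4(c_1,b_1)=q_4(b_1,c_1)=q_4(c_1,d_1)=q_5(c_1,d_1)=0$ and $\mathrm{tr}(d_1)=N(c_1)$, obtains $b_1=0$ from a $3\times3$ linear system in $b_1,\rho(b_1),\rho^2(b_1)$ whose determinant $(c_1+\rho(c_1))(\rho(c_1)+\rho^2(c_1))(\rho^2(c_1)+c_1)$ is nonzero because $\mathrm{tr}(c_1)=0$, gets $d_1=c_1^3$ from the analogous system when $c_1\neq\rho(c_1)$, and must treat the degenerate branch $c_1=\rho(c_1)$ (hence $c_1=0$) separately, going to degrees $d-6$ and $d-9$ to force $N(b_1)=0$ and $N(d_1)=0$. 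Your single sentence dismissing the $c_1=0$ branch conceals this same unproved work. Until you carry the descent through weights $4e-7$, $4e-8$, $4e-9$ with the second-order expansions in hand and resolve the degenerate branches, what you have is a plausible and in places elegant outline, not a proof of the proposition.
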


\begin{proof}

We will need the following lemmas : 

\begin{lemma}\label{e3mod4}
Suppose $e \equiv 3  \pmod{4}$ and let $s=x+y$. We have :
\[
\left( x+z \right)^2 \phi_e = \left( x^{e-1}+z^{e-1} \right) + s\frac{ \left( x^{e-2}z+z^{e-2}x \right)}{x+z}+s^2\frac{\left( x^{e-3}+z^{e-3} \right) \left( x^2+z^2+xz \right)}{\left( x+z \right)^2}  \pmod{s^3} 
\]
\end{lemma}

\begin{proof}
We have 
\[
A \phi_e=x^e+y^e+z^e+(x+y+z)^e.
\]
Let us put $s=y+z$. We get

\begin{eqnarray*}
  &  & \left( x + z \right) \left( s + x + z \right) s \phi_e\\
  & = & x^e +\left( s + z \right)^e + z^e + \left( x + s \right)^e\\
  & = & s \left( x^{e - 1} + z^{e - 1} \right) + s^2 \left( x^{e - 2} + z^{e
  - 2} \right) + s^3 \left( x^{e - 3} + z^{e - 3} \right)   \pmod{s^4}  .
\end{eqnarray*}
Hence
\begin{multline} \label{eqlem1}
  s \left( x + z \right) \phi_e + \left( x + z \right)^2 \phi_e = \left( x^{e
  - 1} + z^{e - 1} \right) + s \left( x^{e - 2} + z^{e - 2} \right) + s^2
  \left( x^{e - 2} + z^{e - 2} \right) + \\ s^3 \left( x^{e - 3} + z^{e - 3}
  \right)  \pmod{s^4}  .
\end{multline}
As we have
\[ \left( x + z \right)^2 \phi_e = \left( x^{e - 1} + z^{e - 1} \right) 
  \pmod{s}, \]
and hence
\[ \left( x + z \right) \phi_e = \frac{x^{e - 1} + z^{e - 1}}{x + z}  \pmod{s}, \]
we deduce
\begin{eqnarray*}
  & \left( x + z \right)^2 \phi_e & = \left( x^{e - 1} + z^{e - 1} \right) +
  s \left( x^{e - 2} + z^{e - 2} \right) + s \left( x + z \right) \phi_e 
  \pmod{s^2}\\
  &  & = \left( x^{e - 1} + z^{e - 1} \right) + s \left( x^{e - 2} + z^{e -
  2} \right) + s \frac{x^{e - 1} + z^{e - 1}}{x + z}  \pmod{s^2}\\
  &  & = \left( x^{e - 1} + z^{e - 1} \right) + s \frac{x^{e - 2} z + z^{e -
  2} x}{x + z}  \pmod{s^2}.
\end{eqnarray*}
So we have
\begin{equation} \label{eqlem2}
 \left( x + z \right)^2 \phi_e = \left( x^{e - 1} + z^{e - 1} \right) + s
   \frac{x^{e - 2} z + z^{e - 2} x}{x + z}  \pmod{s^2} 
\end{equation}
and

\begin{equation} \label{eqlem3} \left( x + z \right) \phi_e = \frac{\left( x^{e - 1} + z^{e - 1}
   \right)}{x + z} + s \frac{x^{e - 2} z + z^{e - 2} x}{\left( x + z
   \right)^2}   \pmod{s^2} . \end{equation}
Using \ref{eqlem2} and \ref{eqlem3} in \ref{eqlem1} we get
\begin{eqnarray*}
  & \left( x + z \right)^2 \phi_e & = \left( x^{e - 1} + z^{e - 1} \right) +
  s \left( x + z \right) \phi_e + s \left( x^{e - 2} + z^{e - 2} \right) + s^2
  \left( x^{e - 3} + z^{e - 3} \right)  \pmod{s^3}\\
  &  & = \left( x^{e - 1} + z^{e - 1} \right) + s \frac{\left( x^{e - 1} +
  z^{e - 1} \right)}{x + z} + s^2 \frac{x^{e - 2} z + z^{e - 2} x}{\left( x +
  z \right)^2} + s \left( x^{e - 2} + z^{e - 2} \right) + \\
  &  & s^2 \left( x^{e - 3} + z^{e - 3} \right)  \pmod{s^3}\\
  &  & = \left( x^{e - 1} + z^{e - 1} \right) + s \frac{\left( x^{e - 2} z +
  z^{e - 2} x \right)}{x + z} + s^2 \frac{\left( x^{e - 3} + z^{e - 3} \right)
  \left( x^2 + z^2 + xz \right)}{\left( x + z \right)^2}  \pmod{s^3} .
\end{eqnarray*}

\end{proof}

\begin{lemma} \label{e1mod4}
Suppose $e \equiv 1 \pmod{4}$ and let $s=x+y$. We have :
\[
\left( x+z \right)^2 \phi_e = \left( x^{e-1}+z^{e-1} \right) + s\frac{ \left( x^{e-1}+z^{e-1} \right)}{x+z}+s^2\frac{\left( x^{e-1}+z^{e-1} \right) }{\left( x+z \right)^2} \pmod{s^3}
\]
\end{lemma}

\begin{proof}
The proof of lemma \ref{e1mod4} is similar to the proof of lemma \ref{e3mod4}.
\end{proof}

\begin{lemma} \label{eodd}
  For all odd $e \in \mathbb{N}$ we have
  \[ \phi_e (x, z, z) = \frac{x^{e - 1} + z^{e - 1}}{(x + z)^2} \]
\end{lemma}

The proof is straightforward from previous lemma. It can also be found in \cite{Delgado2012}

For all $k \in \{0, 1, \ldots, d\}$ we have
\[ a_k \phi_k = \sum_{i = 0}^9 P_i Q_{k - i - 3} . \]

\textbf{Degree $d-3$}\\

We have
\[ \phi_d = A^3 \phi_e^4 = P_9 Q_{d - 12} . \]
As $P_9 = A^3$, we get $Q_{d - 12} = \phi_e^4$.\medskip

\textbf{Degree $d - 4$}\\

We have
\[ a_{d - 1} \phi_{d - 1} = P_9 Q_{d - 13} + P_8 Q_{d - 12} . \]
As $P_8 = A^2  (s_1^2 \text{tr} (c_1) + s_2 \text{tr} (c_4))$, it gives us
\[ a_{d - 1} \phi_{d - 1} = A^3 Q_{d - 13} + A^2 \phi_e^4  (s_1^2 \text{tr}
   (c_1) + s_2 \text{tr} (c_4)) . \]
By lemma \ref{eodd} $\phi_{d - 1}$ is not divisible by $A$, so $a_{d - 1} = 0$ and
\[ AQ_{d - 13} = \phi_e^4  (s_1^2 \text{tr} (c_1) + s_2 \text{tr} (c_4)) . \]
We know that $A$ is prime with $s_1^2 \text{tr} (c_1) + s_2 \text{tr} (c_4)$
because $(x + y)$ does not divide this polynomial, and $A$ does not divide
either $\phi_5^4$, which implies $Q_{d - 13} = P_8 = 0$ and $\text{tr} (c_1) =
\text{tr} (c_4) = a_{d - 1} = 0$.\\

\textbf{Degree $d - 5$}\\

We have
\[ a_{d - 2} \phi_{d - 2} = a_{d - 2}  (A \phi_{2 e - 1}^2) = P_9 Q_{d - 14} +
   P_8 Q_{d - 13} + P_7 Q_{d - 12} . \]
Knowing that $P_8 = Q_7 = 0$ we obtain
\[ a_{d - 2}  (A \phi_{2 e - 1}^2) = P_9 Q_{d - 14} + P_7 Q_{d - 12} . \]
We also know that
\[ P_7 = A (s_1^4 q_1 (c_1) + s_2^2 q_1 (c_4) + s_1^2 s_2 q_5 (c_1, c_4)) +
   A^2 s_1 \text{tr} (b_1), \]
denoting

$q_1 (c_i) = c_i \rho (c_i) + c_i \rho^2 (c_i) + \rho (c_i) \rho^2 (c_i)$ and

$q_5 (c_1, c_4) = c_1  (\rho (c_4) + \rho^2 (c_4)) + c_4  (\rho (c_1) + \rho^2
(c_1)) + \rho (c_1) \rho^2 (c_4) + \rho (c_4) \rho^2 (c_1)$.

So
\begin{equation}
  a_{d - 2} \phi_{2 e - 1}^2 = A^2 Q_{d - 14} + \phi_e^4  (s_1^4 q_1 (c_1) +
  s_2^2 q_1 (c_4) + s_1^2 s_2 q_5 (c_1, c_4) + As_1 \text{tr}
  (b_1)),
\end{equation}
Putting $y = z$ we have
\[ a_{d - 2} \left( \frac{x^{4 e - 4} + z^{4 e - 4}}{(x + z)^4} \right) +
   \left( \frac{x^{4 e - 4} + z^{4 e - 4}}{(x + z)^8} \right)  (q_1 (c_1) x^4
   + q_1 (c_4) z^4 + x^2 z^2 q_5 (c_1, c_4)) = 0, \]
hence we obviously have $q_5 (c_1, c_4) = 0$ and $q_1 (c_1) = q_1 (c_4) = a_{d
- 2}$. We do not assume that $y = z$ anymore.

We know from (4) that $A$ divides $a_{d - 2}  (\phi_{2 e - 1}^2 + \phi_e^4
(s_1^4 + s_2^2) )$, as it is a square, $A^2$ divides it too. Replacing
in (4) we get
\[ a_{d - 2}  (\phi_{2 e - 1}^2 + \phi_e^4 (s_1^4 + s_2^2) )^2 + A^2
   Q_{d - 14} = A \phi_e^4 s_1 \text{tr} (b_1), \]
so $A$ divides $\text{tr} (b_1) s_1 \phi_e^4$. But $A$ divides neither $s_1$
nor $\phi_e^4$ so $\text{tr} (b_1) = 0$. In conclusion we have
\begin{eqnarray*}
  & P_7 = q_1 (c_1)  (s_1^2 + s_2)^2 A = q_1 (c_1) A \phi_5^2 . &
  \text{and}\\
  & Q_{d - 14} = q_1 (c_1) \frac{\phi_{2 e - 1}^2 + \phi_e^4 \phi_5^2}{A^2} .
  & 
\end{eqnarray*}
\begin{lemma}
  The polynomial $Q_{d - 14} (x, z, z)$ is equal to zero.
\end{lemma}

\begin{proof}
  from lemma \ref{e1mod4} and \ref{e3mod4} we get, if either $e \equiv 3 \pmod{4}$ or $e \equiv
  1 \pmod{4}$:
  \begin{eqnarray*}
   Q_{d - 14} = \left( \frac{\left( \frac{x^{2 e - 2} + z^{2 e - 2}}{(x +
     z)^2} + s \left( \frac{x^{2 e - 2} + z^{2 e - 2}}{(x + z)^3} \right) +
     s^2 R_1 \right) }{A} \right)^2 + \\ \left( \frac{ \left( \frac{x^{2 e - 2} + z^{2 e - 2}}{(x + z)^4} + s^2 R_2 \right) ((x + z)^2 + s (x + z) + s^2)}{A} \right)^2 
  \end{eqnarray*}
  \[ = \frac{s}{(x + y)  (x + z)} R_3, \]
  hence $Q_{d - 14} (x, z, z) = 0$.
\end{proof}

\textbf{Degree $d - 6$}\\

We have
\[ a_{d - 3} \phi_{d - 3} = P_9 Q_{d - 15} + P_8 Q_{d - 14} + P_7 Q_{d - 13} +
   P_6 Q_{d - 12} = P_9 Q_{d - 15} + P_6 Q_{d - 12} . \]
We know that
\begin{multline*} P_6 = A^2 \text{tr} (d_1) + A (s_1^3 q_5 (c_1, b_1) + s_1 s_2 q_5 (c_1, b_1))
   + s_1^6 N (c_1) + s_1^4 s_2 q_4 (c_1, c_4) + \\ s_1^2 s_2^2 q_4 (c_4, c_1) +
   s_2^3 N (c_4) \end{multline*}
where
\[ N (a) = a \rho (a) \rho^2 (a) \text{which is the norm of a in } \mathbb{F}_q, \]
\[ q_4 (a, b) = a \rho (a) \rho^2 (b) + a \rho (b) \rho^2 (a) + b \rho (a)
   \rho^2 (a) \]
and
\[ q_5 (a, b) = a (\rho (b) + \rho^2 (b)) + b (\rho (a) + \rho^2 (a)) + \rho
   (a) \rho^2 (b) + \rho (b) \rho^2 (a), \]
for all $a, b$ in $\mathbb{F}_{q^3}$.

Making $y = z$ we get:
\[ a_{d - 3} \phi_{d - 3} (x, z, z) = P_6 (x, z, z) \phi_e^4 (x, z, z), \]
with
\[ P_6 (x, z, z) = (c_1 x^2 + c_4 z^2)  (\rho (c_1) x^2 + \rho (c_4) z^2) 
   (\rho^2 (c_1) x^2 + \rho^2 (c_4) z^2) . \]
As
\[ \phi_{d - 3} (x, z, z) = \frac{x^{d - 4} + z^{d - 4}}{(x + z)^2} \]
and
\[ \phi_e^4 (x, z, z) = \frac{x^{d - 4} + z^{d - 4}}{(x + z)^8}, \]
we have
\[ (c_1 x^2 + c_4 z^2)  (\rho (c_1) x^2 + \rho (c_4) z^2)  (\rho^2 (c_1) x^2 +
   \rho^2 (c_4) z^2) = a_{d - 3}  (x + z)^6 . \]
Hence $c_1 = c_4$.\\
Now we have 

\begin{equation} \label{qd15}
N(c_1)\left( \phi_{d_3}+\phi_5^3 \phi_e^4 \right) = A^3 Q_{d-15} +  \tmop{tr}(d_1)A^2 \phi_e^4 + q_5 (c_1,b_1) A \phi_5 s_1 \phi_e^4.
\end{equation}

One can verify with lemma \ref{e3mod4} and \ref{e1mod4} that $A^2$ divides $\phi_{d_3}+\phi_5^3 \phi_e^4$ and we obtain $q_5 (c_1,b_1)=0$ since $\phi_5 s_1 \phi_e^4$ is prime with $A$. Plugging the last result into \ref{qd15} and dividing the whole expression by $A^2$ we get 

\[
A Q_{d-15} = N(c_1) \frac{\left( \phi_{d_3}+\phi_5^3 \phi_e^4 \right)}{A^2} + \tmop{tr}(d_1) \phi_e^4.
\]
Putting $y=z$, we obtain 
\[
N(c_1) \frac{\left( \phi_{d_3}+\phi_5^3 \phi_e^4 \right)}{A^2}(x,z,z) = \tmop{tr}(d_1) \phi_e^4(x,z,z).
\]

Now either $\frac{\left( \phi_{d_3}+\phi_5^3 \phi_e^4 \right)}{A^2}(x,z,z)$ is different from $\phi_e^4(x,z,z)$ and $\tmop{tr}(d_1) = N(c_1) = 0$, or $\frac{\left( \phi_{d-3}+\phi_5^3 \phi_e^4 \right)}{A^2}(x,z,z)= \phi_e(x,z,z)$ and $\tmop{tr}(d_1) = N(c_1)$ but in both case we have $\tmop{tr}(d_1) = N(c_1)$.\\

\textbf{Degree $d - 7$}

We have
\begin{equation} \label{qd16}
 a_{d - 4} \phi_{d - 4} = P_9 Q_{d - 16} + P_8 Q_{d - 15} + P_7 Q_{d - 14} +
   P_6 Q_{d - 13} + P_5 Q_{d - 12},
\end{equation}
where
\[ P_5 = q_4 (c_1, b_1) s_1 \phi_5^2 + A (q_1 (b_1) s_1^2 + q_5 (c_1, d_1)
   \phi_5), \]
We know that $\phi_{d - 4} = A^7 \phi_{\frac{e - 1}{2}}$ so making again $y =
z$ enables us to obtain:
\[ 0 = P_5 (x, z, z) = q_4 (c_1, b_1)  (x (x^2 + z^2)) \]
and finally $q_4 (c_1, b_1) = 0$. Now \ref{qd16} becomes

\[
a_{d-4}A^7 \phi_{\frac{e - 1}{2}} = A^3 Q_{d-16} + q_1(c_1)A \phi_5^2 Q_{d-14} +  \left( q_1(b_1)s_1^2 + q_5(c_1,d_1) \phi_5 \right) A \phi_e^4.
\]
We divide this expression by $A$ and we put $y=z$ and it gives

\[
q_1(b_1)x^2 = q_5(c_1,d_1)(x^2+y^2),
\]

so $q_1(b_1) = q_5(c_1,d_1)=0 $. \\

\textbf{degree $d - 8$}

For this step we have:
\[ a_{d - 5} \phi_{d - 5} = P_9 Q_{d - 17} + P_8 Q_{d - 16} + P_7 Q_{d - 15} +
   P_6 Q_{d - 14} + P_5 Q_{d - 13} + P_4 Q_{d - 12} . \]
with
\[ P_4 = q_4 (b_1, c_1) s_1^2 \phi_5 + q_4 (c_1, d_1) \phi_5^2 + q_5 (b_1, d_1)
   As_1, \]
Putting $y = z$ we get:
\begin{multline*} a_{d - 5}  \frac{x^{d - 6} + z^{d - 6}}{(x + z)^2} = \frac{1}{(x + z)^8} 
   ((q_4 (b_1, c_1) + q_4 (c_1, d_1)) (x^d + x^4 z^{d - 4}) + \\ q_4 (b_1, c_1)
   (x^{d - 2} z^2 + x^2 z^{d - 2}) + q_4 (c_1, d_1) (x^{d - 4} z^4 + z^d)) . 
\end{multline*}
Putting on the same denominator we have
\[ a_{d - 5}  (x^{d - 6} z^6 + x^6 z^{d - 6}) = 0 \]
and then $a_{d - 5} = 0$, therefore $q_4 (b_1, c_1) = q_4 (c_1, d_1) = 0$\\

\textbf{Summary}

At this point we get the following system
\begin{eqnarray*}
  \left\{ \begin{array}{l}
    q_1 (b_1) = 0\\
    \text{tr} (b_1) = 0\\
    q_5 (c_1, b_1) = 0\\
    \text{tr} (c_1) = 0\\
    q_4 (c_1, b_1) = 0\\
    q_4 (b_1, c_1) = 0\\
    q_4 (c_1, d_1) = 0 \\
    q_5(c_1,d_1) = 0 \\
    \tmop{tr}(d_1)= N(c_1)
  \end{array} \right. &  & 
\end{eqnarray*}
Let us suppose that $c_1 \neq 0$. The linear system in $b_1, \rho (b_1),
\rho^2 (b_1)$ formed by the three first equations gives $b_1 = 0$. Indeed, the
determinant of this system is $(c_1 + \rho (c_1))  (\rho (c_1) + \rho^2 (c_1))
(\rho^2 (c_1) + c_1)$ can vanish only if $c_1 = 0$ because $\text{tr} (c_1) =
0$.

If, moreover, $c_1 \neq \rho (c_1)$, the last 3 equations form a linear system
in $d_1, \rho (d_1), \rho^2 (d_1)$ which can gives
\[ d_1 = c_1^3 . \]
Therefore $R = c_1 \phi_5^2 + c_1^3$ which is the form given in the
proposition 4.

If $c_1 = \rho (c_1)$ then, as $\text{tr} (c_1) = 0$, $c_1 = 0$. Let us
suppose from now on that it is the case. We need to use
\[ a_{d - 6} \phi_{d - 6} = P_9 Q_{d - 18} + P_8 Q_{d - 17} + P_7 Q_{d - 16} +
   P_6 Q_{d - 15} + P_5 Q_{d - 14} + P_4 Q_{d - 13} + P_3 Q_{d - 12}, \]
when we replace $c_1$ by zero we get
\[ a_{d - 6} A \phi^2_{2 e - 1} = A^3 Q_{d - 18} + P_3 \phi_e^4, \]
where
\[ P_3 = N (b_1) s_1^3 + q_1 (d_1) A. \]
If moreover we make $y = z$ we obtain
\[ 0 = P_3 (x, z, z) = N (b_1) x^3 . \]
so $N (b_1) = 0$. Therefore $b_1 = 0$.

We now use
\begin{multline*} a_{d - 9} \phi_{d - 9} = P_9 Q_{d - 21} + P_8 Q_{d - 20} + P_7 Q_{d - 19} +
   P_6 Q_{d - 18} + P_5 Q_{d - 17} + P_4 Q_{d - 16} + P_3 Q_{d - 15} + \\ P_2
   Q_{d - 14} + P_1 Q_{d - 13} + P_0 Q_{d - 12},
\end{multline*}
which gives:
\[ a_{d - 9} \phi_{d - 9} = A^3 Q_{d - 21} + N (d_1) \phi_e^4. \]
If we put $y = z$ we obtain
\[ a_{d - 9}  \frac{x^{d - 10} + z^{d - 10}}{(x + z)^2} = N (d_1) \frac{x^{d -
   4} + z^{d - 4}}{(x + z)^8}. \]
Putting on the same denominator we get $a_{d - 9} = 0$
and therefore $N (d_1) = 0$, hence $d_1 = 0$. It means that $R = 0$, finally
proving the first part of proposition 1.\\

Now let us consider $L \left( x \right) = x \left( x + c_1 \right) \left( x + \rho
\left( c_1 \right) \right) \left( x + \rho^2 \left( c_1 \right) \right)$,
since $\tmop{tr} \left( c_1 \right) = 0$, $L$ is a $q$-affine polynomial and as $L(x)$ has only one root of $0$ in $\mathbb{F}_q$ (that is $x=0$), $L(x)$ is a $q$-affine permutation. One can verify that
\[ \frac{L \left( x \right)^3 + L \left( y \right)^3 + L \left( z \right)^3 +
   L \left( x + y + z \right)^3}{\left( x + y \right) \left( y + z \right)
   \left( z + x \right)} = \left( A + R \right) \left( A + \rho \left( R
 \right) \left( A + \rho^2 ( R \right) \right) . \]
So it means that the polynomial $\phi$ associated to $L \left( x \right)^3$ divides $\phi_f$, which proves the second part of proposition 1.

\end{proof}

We can now complete the proof of theorem 9 by showing that $f$ is CCZ-equivalent to a polynomial of degree $e$.

\section{CCZ-equivalence}
\label{sec:4}

Let us consider $c_1 \in \mathbb{F}_{q^3}$ such that $\text{tr}(c_1)=0$ and $R \left( x, y, z \right) = c_1 \phi_5 +
c_{1}^3, \in \mathbb{F}_{q^3}\left[ x, y, z \right]$. We recall that $L \left( x \right) = x \left( x + c_1 \right) \left( x +
\rho \left( c_1 \right) \right) \left( x + \rho^2 \left( c_1 \right) \right)$.

\begin{theorem}
  Let $f$ be a function such that $\deg \left( f \right) = 4 e$, with $e > 3$ such that $\phi_e$ is absolutely irreducible, and such that the polynomials of the form
  \[ \left( x + y \right) \left( x + z \right) \left( y + z \right) + R, \]
  divides $\phi$, therefore $f$ is CCZ-equivalent to $x^e + S \left( x
  \right)$, where $S \in \mathbb{F}_q \left[ x \right]$ is of degree at most $e - 1$.
\end{theorem}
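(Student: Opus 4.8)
The plan is to extract from the divisibility hypothesis a rigid structural constraint on $f$ — that all of its derivatives in directions lying in the kernel of $L$ are constant — and then to recognise this as the statement that $f$ is, up to an $\mathbb{F}_2$-affine summand, a polynomial in $L$; right-composition by the permutation $L$ then delivers the CCZ-equivalence.

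First I would record what Proposition 1 gives about $L(x)=x(x+c_1)(x+\rho(c_1))(x+\rho^2(c_1))$. Since $\tmop{tr}(c_1)=0$, $L$ is $\mathbb{F}_2$-linearized with coefficients in $\mathbb{F}_q$ (the nonleading coefficients are the $\rho$-invariant symmetric functions of the $\rho^i(c_1)$), its root set $V=\{0,c_1,\rho(c_1),\rho^2(c_1)\}$ is an $\mathbb{F}_2$-subspace of dimension $2$ because $c_1+\rho(c_1)=\rho^2(c_1)$, and — in the nondegenerate case $c_1\neq0$, forcing $c_1\notin\mathbb{F}_q$ — the only root of $L$ in $\mathbb{F}_q$ is $0$, so $L$ restricts to an $\mathbb{F}_2$-linear permutation of $\mathbb{F}_q$. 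The divisibility of $\phi$ by $\phi_{L^3}=(A+R)(A+\rho(R))(A+\rho^2(R))$ from Proposition 1, after clearing the denominator $A$ and using $A\cdot\phi_{L^3}=L(x+y)L(y+z)L(z+x)$ together with additivity $L(x+y)=L(x)+L(y)$, reads
\[ f(x)+f(y)+f(z)+f(x+y+z)=L(x+y)\,L(y+z)\,L(z+x)\,Q(x,y,z) \]
for a polynomial $Q$.

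The key step is a short derivative computation. For each $\alpha\in V$ the linear form $x+y+\alpha$ divides $L(x+y)$, hence the right-hand side, hence the numerator $f(x)+f(y)+f(z)+f(x+y+z)$. Evaluating at $y=x+\alpha$, which annihilates that form since $x+(x+\alpha)+\alpha=0$, yields
\[ 0=f(x)+f(x+\alpha)+f(z)+f(z+\alpha), \]
so the derivative $D_\alpha f(x):=f(x+\alpha)+f(x)$ satisfies $D_\alpha f(x)=D_\alpha f(z)$ for independent $x,z$ and is therefore constant, for every $\alpha\in V$. I then invoke the standard fact that a polynomial is invariant under all translations $x\mapsto x+\alpha$ ($\alpha\in V$) if and only if it is a polynomial in $L(x)=\prod_{\alpha\in V}(x+\alpha)$: subtracting a linearized polynomial $T$ of degree $\le2$ with $T(\alpha)=D_\alpha f$ on $V$ (solvable because $c_1,\rho(c_1)$ are $\mathbb{F}_2$-independent), the difference $f-T$ has vanishing $V$-derivatives, so $f=T+g\circ L$ for a one-variable polynomial $g$. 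Comparing degrees and leading coefficients ($\deg f=4e$, $\deg T\le2$, $L$ monic of degree $4$, $a_{4e}=1$) forces $\deg g=e$ and $g=x^e+S$ with $\deg S\le e-1$.

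Finally I would convert this into the stated equivalence. Since $L$ permutes $\mathbb{F}_q$, right-composition makes $f$ affine-equivalent to $f\circ L^{-1}=g+T\circ L^{-1}$, and $T\circ L^{-1}$ is an $\mathbb{F}_2$-linearized function, whose removal is an extended-affine move; hence $f$ is CCZ-equivalent to $g=x^e+S$. The one place demanding care — and the main obstacle I anticipate — is guaranteeing the final representative is defined over $\mathbb{F}_q$ with $S$ of degree $\le e-1$, rather than merely over $\mathbb{F}_{q^3}$. For this I would argue that the non-linearized part of the reduced polynomial $f\circ L^{-1}\in\mathbb{F}_q[x]$ equals the non-linearized part of $g$ (the two differ only by the linearized $T\circ L^{-1}$), is therefore defined over $\mathbb{F}_q$, carries the top term $x^e$ because $e$ is odd hence not a power of $2$, and differs from $f\circ L^{-1}$ by an $\mathbb{F}_q$-linearized function; removing that function is an extended-affine move over $\mathbb{F}_q$, producing the desired $x^e+S$ with $S\in\mathbb{F}_q[x]$ of degree at most $e-1$. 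The derivative argument driving the whole reduction is robust, so the effort concentrates entirely in this descent-and-degree bookkeeping.
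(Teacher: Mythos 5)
Your argument is sound in the main case but takes a genuinely different route from the paper, and it has one real gap. On the route: the paper never extracts translation-invariance of $f$. Instead it introduces the set $G$ of polynomials $L(x)^e+S(L(x))$ with $S\in\mathbb{F}_q[x]$, $\deg S\le e-1$, no $2$-power monomials, shows $G\subset F$ (where $F$ is the set of normalized degree-$4e$ polynomials with $\phi_{L^3}\mid\phi_f$ and no $2$-power monomials), and then proves $\dim F\le\dim G=e-\delta$ by an induction on homogeneous components ($a_i\phi_i=\sum_k P_kQ_{i-k-3}$, using that $A^3\nmid\phi_i$ when $4\nmid i$), concluding $F=G$ by a dimension count. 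Your derivative computation --- each linear factor $x+y+\alpha$, $\alpha\in V=\ker L$, divides $f(x)+f(y)+f(z)+f(x+y+z)$, hence $D_\alpha f$ is constant, hence $f+T$ is $V$-translation-invariant and therefore a polynomial in $L=\prod_{\alpha\in V}(x+\alpha)$ --- replaces that dimension count by a direct structural argument and even produces the explicit polynomial identity $f=g(L)+T$, which the paper's proof only gets implicitly. Two refinements: the interpolation of $T$ on all of $V$ needs additivity of $\alpha\mapsto D_\alpha f$, which follows in one line from $D_{\alpha+\beta}f(x)=D_\alpha f(x+\beta)+D_\beta f(x)$ once these are constants; and your descent to $\mathbb{F}_q$ via nonlinearized parts of $f\circ L^{-1}$ tacitly works modulo $x^q+x$ (so needs $e<q$) --- it is cleaner to apply $\rho$ to the identity $f=g(L)+T$ (note $L\in\mathbb{F}_q[x]$ since its non-leading coefficients are symmetric in the $\rho^i(c_1)$), getting $(g+\rho(g))(L)=T+\rho(T)$ with the left side of degree $\ge 4$ unless $g=\rho(g)$, whence $g,T\in\mathbb{F}_q[x]$ with no reduction needed.

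The genuine gap is the degenerate case $c_1=0$, which the hypothesis includes ($R=0$ is of the stated form, and the proof of Proposition 1 explicitly lands there when $c_1=\rho(c_1)$) and which the paper's $F=G$ argument covers uniformly. Your own setup flags $c_1\neq0$ only for the permutation property, but the whole engine needs it: when $c_1=0$ one has $L(x)=x^4$, $V=\{0\}$, and the first-derivative argument yields nothing, so you cannot conclude $f\in\overline{\mathbb{F}}_q[L]+\text{affine}$. The repair stays within your method but requires higher-order conditions: here $\phi_{L^3}=A^3$, so $(x+y)^4$ divides the numerator, i.e.\ $s^4$ divides $f(x)+f(x+s)+f(z)+f(z+s)$; writing $f(x+s)=\sum_k f_k(x)s^k$ with $f_k$ the Hasse derivatives, this forces $f_1,f_2,f_3$ to be constant, hence (by Lucas' theorem on $\binom{i}{k}$ mod $2$) $a_i=0$ unless $4\mid i$ or $i\le2$, so $f=g(x^4)+a_2x^2+a_1x+a_0$ and the conclusion follows since $x^4$ is a linear permutation of $\mathbb{F}_q$. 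Morally this is the same phenomenon as in your main case --- there the four factors $x+y+\alpha$ are distinct and first-order vanishing suffices; at $c_1=0$ they collapse into $s^4$ and the multiplicity must be consumed by higher derivatives. Without this supplement your proposal proves the theorem only for $c_1\neq0$.
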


\begin{proof}

  Let us consider the set $G$ of the polynomials of the form $g \left( x
  \right) = L \left( x \right)^e + S \left( L \left( x \right) \right)$, where
  $S$ is a polynomial of $\mathbb{F}_q \left[ x \right] $ of degree at most $e - 1$ with no
  monomials of exponent a power of 2. Let $\delta$ be the number of power of 2
  less or equal than $e - 1$. It is easy to remark that $G$ defines an affine
  subspace of the vector space $\mathbb{F}_q [x]$ of dimension $e - \delta$.
  We denote by $\phi_g$ the polynomial $\phi$ associated to $g$ and
  $\phi_{L^n}$ the polynomial $\phi$ associated to $L^n$. So we have
  \[ \phi_g = \phi_{L^{^e}} + S \left( \phi_{L^i} \right) . \]

  Now let us consider the set $F$ of all the polynomials $f$ of degree $4 e$
  with leading coefficient $1$ such that $\phi_{L^3}$ divides their associated
  polynomials $\phi$ and such that $f$ does not have any monomial of exponent
  a power of 2. The goal of this proof is to show that $F = G$. We begin by
  proving that $G \subset F$, then we show that they have the same dimension.
  
  \begin{lemma}
    The set $G$ is a subset of $F$.
  \end{lemma}
  
  \begin{proof}
    It is sufficient to prove that $\phi_{L^3}$ divides $\phi_{L^n}$ for all
    $n \geqslant 3$.

    We know that $x^3 + y^3 + z^3 + \left( x + y + z \right)^3 = A$ divides
    $x^n + y^n + z^n + \left( x + y + z \right)^n$. Putting
    \begin{eqnarray*}
      & X = L \left( x \right) & \\
      & Y = L \left( y \right) & \\
      & Z = L \left( z \right) & 
    \end{eqnarray*}
    we have $X^3 + Y^3 + Z^3 + \left( X + Y + Z \right)^3$ divides $X^n + Y^n
    + Z^n + \left( X + Y + Z \right)^n$. As $\text{tr}(c_1)=0$, $L \left( x \right)$ is a linearized polynomial 
    so $X + Y + Z = L \left( x \right) + L \left( y \right) + L \left( z \right) = L \left( x + y + z
    \right)$ therefore $L \left( x \right)^3 + L \left( y \right)^3 + L \left(
    z \right)^3 + L \left( x + y + z \right)^3$ divides $L \left( x \right)^n
    + L \left( y \right)^n + L \left( z \right)^n + L \left( x + y + z
    \right)^n$ then $\phi_{L^3}$ divides $\phi_{L^n}$.
  \end{proof}
  
  \begin{lemma}
    $F$ defines an affine subspace of the vector space $\mathbb{F}_q \left[ x
    \right]$ of dimension less or equal than $e - \delta$.
  \end{lemma}
  
  \begin{proof}
  We consider the mapping:
    \begin{eqnarray*}
      & \varphi : & F \rightarrow \mathbb{F}_q^{e - \delta}\\
      &  & f \rightarrow \left( a_{d - 4}, \ldots, a_{12} \right)
    \end{eqnarray*}
    It is sufficient to prove that this mapping is one-to-one.
    
    Let $f$ and $f'$ in $F$ be two elements such that $\varphi \left( f
    \right) = \varphi \left( f' \right)$. We write $f = \sum_{i = 0}^d a_i
    x^i$ and $f' = \sum_{i = 0}^d a'_i x^i$. We note $a_k \phi_k = \sum_{i =
    0}^9 P_i Q_{k - i - 3}$ and $a_k' \phi_k = \sum_{i = 0}^9 P_i Q'_{k - i -
    3} .$

    We will show by induction that $a_i = a'_i$ for all $0 \leqslant i
    \leqslant d$ and that $Q_i = Q_i'$ for all $0 \leqslant i \leqslant d -
    12$.
    
    We have $a_d = a_d' = 1$ and $Q_{d - 12} = Q_{d - 12}' = \phi_e^4$.
    
    Suppose that $a_j = a'_j$ and that $Q_{j - 12} = Q'_{j - 12}$ for $j > i$.
    Let us show that $a_i = a'_i$ and $Q_{i - 12} = Q'_{i - 12}$ if $4$ does
    not divide $i$.
    
    If $i \geqslant 12$, we have
    \[ a_i \phi_i = \sum^9_{\sup \left( 0, i-d + 9 \right)} P_k Q_{i-k-3} =
       A^3 Q_{i-12} + \sum^8_{\sup \left( 0, i-d + 9 \right)} P_k Q_{i-k-3},
    \]
    so $A^3$ divides
    \[ a_i \phi_i + \sum^8_{\sup \left( 0, i-d + 9 \right)} P_k Q_{i-k-3} .
    \]
    It divides
    \[ a_i' \phi_i + \sum^8_{\sup \left( 0, i-d + 9 \right)} P_k Q'_{i-k-3} =
       a_i' \phi_i + \sum^8_{\sup \left( 0, i-d + 9 \right)} P_k Q_{i-k-3}, \]
    because $i-k-3 \geqslant i-11$. So it divides $\left( a_i + a'_i
    \right) \phi_i$. If $4$ does not divide $i$ then $A^3$ does not divide
    $\phi_i$ so $a_i = a'_i$ and
    \[ Q_{i-12} = \frac{a_i \phi_i + \sum^8_{\sup \left( 0, i-d + 9 \right)}
       P_k Q_{i-k-3}}{A^3} = \frac{a_i' \phi_i + \sum^8_{\sup \left( 0, i-d +
       9 \right)} P_k Q'_{i-k-3}}{A^3} = Q_{d-12}' . \]

  \end{proof}
  
  From lemma 5 and 6 we obtain $F = G$. So every $f \in F$ is of the form $L \left( x \right)^e + S \left( L \left(x \right) \right)$ and hence they are CCZ-equivalent to $x^{^e} + S \left( x\right)$. If $f$ is of degree $4 e$ with leading coefficient $1$ such that $\phi_{L^3}$ divides their associated polynomials $\phi$ and has monomials of exponent a power of $2$, then $f$ is CCZ-equivalent to a polynomial in $F$ therefore it is also CCZ-equivalent to \ $x^{^e} + S \left( x \right)$.
\end{proof}

We now have that $f$ is CCZ-equivalent to a polynomial of degree $e$ which is
odd. As $e$ is odd and not a Gold or Kasami number (see remark 2), we can deduce from theorem 1 that $f$ cannot be an Exceptional APN function. Contradiction.

\bibliographystyle{spmpsci}      

\end{document}